\newtheorem{thm}{Theorem}
\newtheorem{lemma}{Lemma}
\newtheorem{definition}{Definition}
\newcommand{\order}[1]{O(#1)}
\newcommand{\sysname}[1]{DataLair}{\ignorespaces}
\newcommand{\sysoram}[1]{DL-ORAM}
\DeclarePairedDelimiter\floor{\lfloor}{\rfloor}
\begin{document}

\author{Anrin Chakraborti}
\author{Chen Chen}
\author{Radu Sion}
\affil{Stony Brook University}

\title{\huge DataLair: Efficient Block Storage with Plausible Deniability against Multi-Snapshot Adversaries}
\maketitle
\abstract{
Sensitive information is present on our phones, disks,
watches and computers. Its protection is essential. Plausible
deniability of stored data allows individuals to deny that their
device contains a piece of sensitive information. This constitutes
a key tool in the fight against oppressive governments and
censorship.
Unfortunately, existing solutions, such as the now defunct
TrueCrypt [5], can defend only against an adversary that can
access a user’s device at most once (“single-snapshot adversary”).
Recent solutions have traded significant performance overheads
for the ability to handle more powerful adversaries able to
access the device at multiple points in time (“multi-snapshot
adversary”). 
In this paper we show that this sacrifice is not necessary. We
introduce and build DataLair\footnote{A preliminary version of this paper 
was present as a poster~\cite{datalair_poster} with an overview of the technical solution described here, and 
as a full paper \cite{datalair}. This paper additionally
addresses security concerns raised by a recent private email conversation \cite{detworam} in Section \ref{oram_pets_fix}.}, a practical plausible deniability
mechanism. When compared with existing approaches, DataLair 
is two orders of magnitude faster for public data accesses, and 5 times
faster for hidden data accesses.
An important component in DataLair is a new write-only
ORAM construction which improves on the complexity of the state of
the art write-only ORAM by a factor of $\order{logN}$, where N denotes the underlying storage disk size.}


%

 \section{Introduction}

With increasing amounts of sensitive data being stored on portable storage 
devices, disk encryption has
become a necessity.  Although full disk encryption (FDE) tools (such as dm-crypt) 
provide protection against unauthorized adversaries attempting to access sensitive data at rest, it
does not allow the owner to deny possession of sensitive data.  This is a
serious challenge in the presence of oppressive regimes and other powerful
adversaries that may want to coerce the user into revealing encryption keys. 
Unfortunately, this is an all-too-common occurrence as illustrated by numerous
examples~\cite{censor_ex1,censor_ex2}, where sensitive data in possession of human rights
activists have been subject to government scrutiny in oppressive regimes,
thus endangering the witnesses.

Plausible deniability (PD) provides a strong defense against such coercion.  A
system with PD allows its users to deny the existence of
stored sensitive information or the occurrence of a sensitive
transaction\cite{McDonald:1999:SSF:647596.731709}.

An example of a plausibly deniable storage solution is the successful, yet
unfortunately now-defunct TrueCrypt \cite{Truecrypt}.  TrueCrypt divides a
disk into multiple ``password-protected'' volumes and allows some of these
volumes to be ``hidden'' in order to store sensitive data.  Password-derived
encryption keys are used to encrypt each such volume.  Upon coercion, a user
can plausibly deny the existence of a hidden volume by simply providing a
valid password for one of the non-hidden ones, thus showing a plausible use
for the disk without revealing the hidden volume data. TrueCrypt stores 
hidden volumes in the free space of non-hidden (public) volumes.  To mask
their existence, TrueCrypt fills all free space with random data and
encrypts the hidden data with a randomized semantically secure encryption scheme
with output indistinguishable from random.  

However, as pointed out by Czeskis \cite{Czeskis:2008:DED:1496671.1496678},
TrueCrypt is not secure against an adversary that can access the disk at
multiple points in time (e.g., multiple security checks or border
crossings).  In such scenarios, an adversary can save a disk snapshot and compare
subsequent snapshots with it.  Changes to free space occurring
between snapshots will suggest the existence of hidden data.
%

%
A major reason why TrueCrypt fails to provide PD against
an adversary with multiple snapshots is because it does not attempt to hide
{\em access patterns}.  The adversary can point out exact locations on disk
that have changed in between snapshots and notice that the apparently free
portion of the disk (potentially containing hidden data) appears altered.

To defeat a multi-snapshot adversary, we need to eliminate all evidence of
hidden data and its corresponding accesses -- for example by ensuring that all 
modifications on the disk are attributable and indistinguishable from the traces of
public data operations.

This means that modifications to apparently free space should be part of
normal behavior of plausible public operations and the traces of hidden data
accesses should be indistinguishable from the traces of public data accesses. 

One effective way to achieve this is to ``cloak'' hidden accesses within
public data accesses by always performing a public operation for every hidden
operation.  Further, oblivious access mechanisms (ORAM) can be used for
randomizing accesses and making them indistinguishable
\cite{Blass:2014:TRH:2660267.2660313}.  Unfortunately, ORAMs come with very
high overheads and reduce overall throughput by orders of magnitude.

Fortunately, a new insight emerges that enables a significant throughput
increase: accesses to public data do not need to be hidden since the
existence of public data is admitted. In fact, revealing access patterns
to public data reinforces deniability as it shows non-hidden disk use to
a curious adversary.  

Consequently, \sysname~ uses this insight to design a significantly more
efficient way to achieve strong PD: protecting only
operations on the hidden data, while ensuring that they are
indistinguishable from operations on public data (thus allowing the user to
claim that all I/O to the disk is due to public accesses). Further, \sysname~ also 
optimizes the oblivious access mechanism deployed for hidden data. 

In summary, public data is accessed (almost) directly without the need to
use an oblivious access mechanism while hidden data accesses are mitigated
through a new throughput-optimized write-only ORAM which significantly
reduces access complexity when compared to existing work
\cite{Blass:2014:TRH:2660267.2660313,cryptoeprint:2013:694}.  As a result,
\sysname~ is two orders of magnitude faster for public data accesses, and 5 times faster for
hidden data accesses, when compared to existing work.

%

%
%

 \section{Related Work}

Plausible deniability (PD) was first proposed in relation to deniable encryption
\cite{Canetti:1997:DE:646762.706165}.  Deniable encryption uses
cryptographic techniques to allow decrypting the same ciphertext to
different plaintexts.  

\smallskip
\noindent
{\bf Filesystem Level PD.~}
For storage devices, Anderson et al. first explored the
idea of steganographic filesystems and proposed two solutions for hiding data in \cite{Anderson98thesteganographic}. 
The first solution is to use 
a set of cover
files and their linear combinations to reconstruct hidden files.  The
ability to correctly compute the linear combination required to reconstruct
a file was based on the knowledge of a user-defined password.  The second solution was to use 
a hash based scheme for storing files at locations determined by the hash of
the filename.  This requires storing multiple copies of the same
file at different locations to prevent data loss.  Macdonald and Kahn
\cite{Mcdonald99stegfs:a} designed and implemented an optimized steganographic filesystem for Linux,
that is derived from the second solution proposed in
\cite{Anderson98thesteganographic}.  Pang et al.  \cite{1260829} further improved on
the previous constructions by avoiding hash collisions and more efficient
storage. 

The solutions based on steganographic filesystem only defend against
a single-snapshot adversary.  Han et al.
\cite{Han:2010:MSF:1920261.1920309} designed a steganographic filesystem
that allows multiple users to share the same hidden file.  Further, 
runtime relocation of data ensures deniability against an
adversary with multiple snapshots.  However, the solution does not scale
well to practical scenarios as deniability is attributed to joint-ownership
of sensitive data.  Defy \cite{defy} is a log structured file system for
flash devices that offers PD using secure deletion. 
Although, Defy protects against a multi-snapshot adversary, it does so by
storing all filesystem related metadata in the memory, which does not scale
well for memory constrained systems with large external storage devices.

\smallskip
\noindent
{\bf Block device level PD.~}
At device-level, disk encryption tools such as Truecrypt \cite{Truecrypt}
and Rubberhose \cite{Rubberhose} provide deniability but cannot protect
against a multi-snapshot adversary.  Mobiflage \cite{6682886} also provides
PD for mobile devices against a single-snapshot
adversary.  Blass et al.  \cite{Blass:2014:TRH:2660267.2660313} were the
first to deal with deniability against a multi-snapshot adversary at device
level.  The solution in \cite{Blass:2014:TRH:2660267.2660313} deploys 
a write-only ORAM for mapping data from logical volumes to
an underlying storage device and hiding access patterns for hidden data
within reads to non-hidden (public) data.

\section{Model}	
\label{pd}

We focus on storage-centric plausible deniability (PD) as in the results
discussed above, but we note that PD has also been
defined in more general frameworks \cite{plaus_Den}.

\smallskip
\noindent
{\bf Plausible Deniability in real life.~}
It is important to understand however that the mere use of a system with PD
capability may raise suspicion!  This is particularly the case if PD-enabled
systems have high overheads or are outright impractical when accessed for
public data storage.  This is why it is important to design mechanisms that
are practical and do not impede the use of the device, especially for
storing non-sensitive data.  We envision a future where all block device
logic is endowed with a PD mode of operation.

\subsection{Preliminaries}
\noindent
{\bf Adversary.~}
We consider a computationally bounded ``multi-snapshot'' adversary that has
the power to coerce the user into providing {\em a} password.  As in
existing research \cite{Blass:2014:TRH:2660267.2660313}, we also assume that the
device user is not directly observed by the adversary during writes -- a
small amount of volatile memory is being used during reads
and writes and is inaccessible to the adversary that can only see static
device snapshots.

\smallskip
\noindent
{\bf Configuration.~} 
While we note that there may be a number of other ways to achieve PD
, our focus is on a practical solution involving storage devices
with multiple logical volumes, independently encrypted with user
password-derived keys.

To protect her sensitive data from adversaries, a user may write it
encrypted to one of these logical volumes (the ``hidden'' volume).  For
PD, the user may choose to also write non-sensitive data
to a ``public'' volume, encrypted with a key which can be provided to an
adversary as proof of plausible device use. 

\smallskip
\noindent
{\bf Logical and physical blocks.~}  
For a block device hosting multiple logical (hidden or public) volumes,
clients address data within each volume using logical block addresses.  The data
is stored on the underlying device using physical block addresses.

\smallskip
\noindent
{\bf Access patterns.~}
We define an {\em access pattern} informally as an {\em ordered sequence of
logical block reads and writes}.  


\smallskip
\noindent
{\bf Write traces.~}
We define a {\em write trace} as the actual modifications to physical blocks
due to the execution of a corresponding {\em access pattern}.

\smallskip
\noindent
{\bf Solution Space.~} 
While there might be multiple ways to achieve PD in a
multi-volume multi-snapshot adversary setting, one idea
\cite{Blass:2014:TRH:2660267.2660313} is to ``hide'' operations to the
hidden volume ``within'' operations to the public volume.  This prevents the
adversary from gathering any information regarding user {\em access
patterns} to the hidden volume (hidden data access) by ensuring that the user can
plausibly attribute any and all {\em writes traces} as accesses to the
public volume (public data access) instead.  On coercion, the user can
provide the credentials for the public volume and thus plausibly deny the
existence of the hidden data.  Arguably, otherwise, in the absence of public
volume operations, an adversary could question the reason for any observed
changes to the space allocated for the hidden volume and then rightfully
suspect the existence of a hidden volume.

\smallskip
\noindent
{\bf Atomicity.~}
As in existing work, a very small number of physical block I/O ops
(corresponding to one \sysname~ read/write operation) are assumed to be
performed as atomic transactions.  The adversary may gain access only after
entire transactions have completed (or rolled back).  This is reasonable
since the user is unlikely to perform any sensitive operation in the
presence of an adversary.

\smallskip
\noindent
{\bf Access pattern indistinguishability.~} 
Computational indistinguishability of traces has been widely discussed in
ORAM literature \cite{pathoram,phantom}.  Further, most ORAMs employ
randomization techniques which ensure that the {\em write traces} generated
due to accesses are indistinguishable from random.  As will be detailed
later, indistinguishability of access patterns is one of the main requirements to achieve 
PD.

We also define the link between access patterns and write traces. 
\begin{definition}
Given two access patterns $\mathcal{O}_0 = \{a_{1}, a_{2}, \ldots , a_{i}\}$
and $\mathcal{O}_1 = \{b_{1}, b_{2}, \ldots , b_{i}\}$ and their
corresponding {\em write traces} $\mathcal{W}_0 = \{w_{1}, w_{2}, \ldots ,
w_{i}\}$ and $\mathcal{W}_1 = \{y_{1}, y_{2}, \ldots , y_{i}\}$,
$\mathcal{O}_0$ is called indistinguishable from $\mathcal{O}_1$ iff. 
$\mathcal{W}_0$ is computationally indistinguishable from $\mathcal{W}_1$.
\end{definition}

\subsection{Defining Plausible Deniability (PD-CPA)}

We model PD as a ``chosen pattern'' security game, PD-CPA \footnote{The
similarity with IND-CPA is intentional.  The access patterns correspond to the
``plaintexts'' in an access pattern privacy setting.} for the block device. 
Since we focus on mechanisms that ``hide'' operations to the hidden volume
``within'' operations to the public volume, we also define an
implementation-specific parameter establishing the number of operations that can be
hidden within a public operation.  Specifically, $\phi$ is the ratio of the
number of hidden operations that can be performed with a public operation
such that the {\em write traces} due to the public operation with a hidden
operation is indistinguishable from the {\em write traces} for the same
operation without the hidden operations. This paper proposes a solution with 
$\phi=1$ to ensure that each public operation performs {\em one} hidden operation.

We define PD-CPA($\lambda$, $\phi$) , with security parameter $\lambda$
between a challenger and an adversary as follows:

\begin{framed}
\begin{footnotesize}

\begin{enumerate}

\item Adversary $\mathcal{A}$ provides a storage device $\mathcal{D}$
({\em the adversary can decide its state fully}) to challenger $\mathcal{C}$.

\item $\mathcal{C}$ chooses two encryption keys $K_{pub}$ and $K_{hid}$ using
security parameter $\lambda$ and creates two logical volumes, $V_{pub}$ and
$V_{hid}$, both stored in $\mathcal{D}$.  Writes to $V_{pub}$ and $V_{hid}$
are encrypted with keys $K_{pub}$ and $K_{hid}$ respectively. $\mathcal{C}$ 
also fairly selects a random bit $b$. 

\item $\mathcal{C}$ returns $K_{pub}$ to $\mathcal{A}$.

\item The adversary $\mathcal{A}$ and the challenger $\mathcal{C}$ then
engage in a polynomial number of rounds in which:

\begin{enumerate} 

\item  $\mathcal{A}$ selects two access patterns $\mathcal{O}_0$ and
$\mathcal{O}_1$ with the following restriction: 

\begin{itemize}
\item $\mathcal{O}_1$ and $\mathcal{O}_0$ include the same writes to $V_{pub}$
\item Both $\mathcal{O}_1$ and $\mathcal{O}_0$ {\em may} include writes to $V_{hid}$
\item $\mathcal{O}_0$ or $\mathcal{O}_1$ {\em should not} include more writes to $V_{hid}$ than $\phi$ times the number of operations to $V_{pub}$ in that sequence.
\end{itemize} 

\item $\mathcal{C}$ executes $\mathcal{O}_b$ on $\mathcal{D}$ and sends a
snapshot of the device to $\mathcal{A}$.

\item $\mathcal{A}$ outputs $b^{'}$.

\item $\mathcal{A}$ is said to have ``won'' the round iff. $b^{'} = b$.

\end{enumerate}


\end{enumerate}
\end{footnotesize}
\end{framed}

We note that the restrictions imposed on the adversary-generated access
patterns are necessary to eliminate trivially-identifiable cases.   E.g.,
allowing different writes to $V\_pub$ in $\mathcal{O}_0$ and $\mathcal{O}_1$ 
would allow the adversary to distinguish between the two sequences by simply
decrypting the contents of $V\_pub$ (using the known $K_{pub}$) and comparing the decrypted {\em data}. 
Fortunately, a similar comparison is not possible for $V_{hid}$ since 
$K_{hid}$ is not accessible to the adversary.

The restriction imposed by $\phi$
ensures that the adversary may not trivially distinguish between
$\mathcal{O}_0$ and $\mathcal{O}_1$ by providing sequences of different ``true 
lengths'' -- the number of actual blocks modified 
in the write trace generated by a given access pattern. Specifically, for a given $\phi$, PD-CPA assumes 
that the true length of a sequence with $k$ writes to $V\_pub$ is $k \times \phi$. Since,  $\mathcal{O}_0$ and $\mathcal{O}_1$ 
have the {\em same} writes to $V\_pub$, their true lengths are the same. 
The additional $\phi$ writes in the corresponding write 
traces generated for the sequences allows hiding writes to $V\_hid$. Thus, 
if the number of writes to $V\_hid$ exceed the number allowable by $\phi$, the sequences become 
trivially distinguishable by their true lengths.

%

\smallskip
\noindent
{\bf Relationship with existing work.~} 
PD-CPA is similar to the hidden volume encryption game in
\cite{Blass:2014:TRH:2660267.2660313} with the notable difference that
PD-CPA empowers the adversary further by giving her the ability to choose
the input device.  This consideration is regarding a practical scenario
where an oppressive regime officer might be aware of particular underlying
properties of a storage device.  Thus, a PD-CPA secure solution should not be reliant
on the properties of a particular kind of storage device that can be used
by the challenger, as that in itself would be suspicious to the
adversary\footnote{For example, \cite{hiding_flash} exploits the variations in the
programming time of a flash device to hide data by encoding bits in the
programming time of individual cells.}.

\begin{definition}
A storage mechanism is ``plausibly deniable'' if it ensures that
$\mathcal{A}$ cannot win any round of the corresponding PD-CPA game with a
non-negligible advantage (in the security parameter $\lambda$) over random
guessing.
\end{definition}

\subsection{Necessary and sufficient conditions for PD-CPA}

\sysname~ provides a plausibly deniable solution defeating a PD-CPA
adversary.  We note informally here (and prove later) that the
following conditions are necessary and sufficient to ensure PD-CPA security.

\begin{enumerate}
 \item Indistinguishability between hidden data write access patterns (``access pattern indistinguishability'', HWA).
 \item Indistinguishability between write traces that include public data accesses with one (or more) hidden data accesses, and 
      the {\em same} public data accesses without {\em any} hidden data accesses (``access type indistinguishability'', PAT). 

\end{enumerate}

\smallskip
\noindent
{\bf Indistinguishability between hidden write access patterns (HWA).~}
Recall that in PD-CPA, the adversary can include writes to $V_{hid}$ in both
$\mathcal{O}_0$ and $\mathcal{O}_1$.  If the write traces were
distinguishable, the two sequences would become distinguishable to an
adversary providing different accesses to $V_{hid}$ in the sequences.

Note that HWA indistinguishability also ensures that writes to the same
logical location in $V_{hid}$ in two different rounds of PD-CPA results in
write traces that are independently distributed and thus prevents an
adversary from ``linking'' accesses to the same logical location in
successive rounds.  In the absence of this ``unlinkability'', an adversary
could provide the same accesses to $V_{hid}$ as part of $\mathcal{O}_1$ in
successive rounds with only writes (possibly different) to $V_{pub}$ as
$\mathcal{O}_0$.  On observing the same write traces for successive rounds,
the adversary could correctly predict that the sequence executed is
$\mathcal{O}_1$.

\smallskip
\noindent
{\bf Indistinguishability between public access {\em write traces} (PAT).~}
In PD-CPA, an adversary can provide $\mathcal{O}_0$ with only accesses to
$V_{pub}$ and $\mathcal{O}_1$ with accesses to both $V_{pub}$ and $V_{hid}$
and win trivially if the sequences' write traces were distinguishable.  In
effect, to ensure PD-CPA security, {\bf in case $\mathcal{O}_0$ and
$\mathcal{O}_1$ include the {\em same} public data operations, they should
be indistinguishable}.  Note that both sequences should contain the {\em
same} public data operations since otherwise they are trivially
distinguishable on the basis of any additional public operation that is
performed, as discussed above.  This is why the write trace due to a sequence of public data
operations plus (one or more) hidden data write(s) should be
indistinguishable from the write trace due to accesses including the {\em
same} public data operations without any hidden data write(s).

\begin{thm}
\label{model:thm}
A storage mechanism is PD-CPA secure iff.  it guarantees
indistinguishability between hidden write access patterns (HWA) 
{\em and} indistinguishability between public operations with and
without hidden writes (PAT).
\end{thm}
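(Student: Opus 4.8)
The plan is to establish the two implications separately: first \textbf{sufficiency}, that HWA together with PAT imply PD-CPA security, and then \textbf{necessity}, that PD-CPA security forces each of HWA and PAT to hold. I would prove necessity by contraposition, exhibiting an explicit winning adversary whenever one of the two properties fails; the sufficiency direction requires a hybrid argument over the full multi-round transcript and is where the real work lies.

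For \textbf{necessity}, suppose first that PAT fails, i.e. there is a public-operation sequence whose write trace with an appended hidden write is distinguishable from its trace without. The adversary sets $\mathcal{O}_0$ to contain exactly those public operations (and no hidden write) and $\mathcal{O}_1$ to contain the same public operations plus that hidden write; both satisfy the game's restrictions (identical public writes, hidden writes within the $\phi$ budget), and the assumed distinguisher wins a single round with non-negligible advantage. Suppose instead that HWA fails. In the ``marginal'' case two distinct hidden-write patterns produce distinguishable traces, and the adversary places one pattern in $\mathcal{O}_0$ and the other in $\mathcal{O}_1$ (keeping the public writes identical) to win. In the ``linking'' case the adversary repeats across rounds the scenario sketched in the HWA discussion: $\mathcal{O}_1$ always issues the same hidden write to a fixed location $L$ while $\mathcal{O}_0$ issues only public writes, so that a mechanism failing to re-randomize leaks a repeated, linkable trace pattern under $b=1$ that is absent under $b=0$. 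In every case PD-CPA security is violated, proving both properties necessary.

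For \textbf{sufficiency}, fix any polynomial number $n$ of rounds and any valid challenge sequences. Since the game fixes a single bit $b$, I would argue that the ``all-$\mathcal{O}_0$'' transcript is computationally indistinguishable from the ``all-$\mathcal{O}_1$'' transcript via a sequence of hybrids $\mathsf{H}_0,\dots,\mathsf{H}_n$, where $\mathsf{H}_i$ executes rounds $1,\dots,i$ according to $\mathcal{O}_1$ and the remaining rounds according to $\mathcal{O}_0$. Each neighbouring pair $\mathsf{H}_{i-1},\mathsf{H}_i$ differs only in round $i$, where the two executions share the same public operations and differ only in their hidden writes. PAT then bridges the ``present versus absent hidden write'' gap (each side's round-$i$ trace is indistinguishable from the pure public trace), while HWA bridges any ``different hidden pattern'' gap; together they make the round-$i$ traces in the two hybrids indistinguishable. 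Transitivity over the $n+1$ hybrids, with the total advantage bounded by $n$ times the single-round advantage, yields a negligible bound and hence PD-CPA security.

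The \textbf{main obstacle} is that write traces are not independent across rounds: the trace produced in round $i$ depends on the disk state left by rounds $1,\dots,i-1$, so a naive per-round application of PAT and HWA does not immediately compose. This is precisely where the unlinkability guarantee inside HWA is essential --- it ensures that a hidden write re-randomizes the affected physical blocks, so the post-round disk state carries no residual, linkable information about which logical location was touched. I would therefore state both properties as holding for every adversarially chosen disk state (consistent with the game letting $\mathcal{A}$ fix $\mathcal{D}$), and use this state-robust form to argue that the intermediate disk states of $\mathsf{H}_{i-1}$ and $\mathsf{H}_i$ remain indistinguishable, so that each hybrid step is a sound reduction to a single-round PAT or HWA distinguisher.
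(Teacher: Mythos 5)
Your proposal is correct and its skeleton matches the paper's: necessity is argued by contraposition in both (the PAT-failure adversary submits the same public writes with and without a hidden write; the HWA-failure adversary either submits two distinguishable hidden patterns in one round or exploits linkable repeated hidden writes across rounds), and the heart of sufficiency in both is the observation that, since $\mathcal{O}_0$ and $\mathcal{O}_1$ must contain identical public writes, they can differ only by having different hidden writes (handled by HWA) or by one having hidden writes the other lacks (handled by PAT). Where you genuinely depart from the paper is in how sufficiency is closed out: the paper gives a direct single-round contradiction argument ("if $\mathcal{A}$ wins, one of the two cases yields a distinguisher against HWA or PAT") and does not explicitly address composition across the polynomially many rounds, whereas you run a hybrid argument $\mathsf{H}_0,\dots,\mathsf{H}_n$ over the full transcript and explicitly confront the fact that round-$i$ traces depend on the disk state left by earlier rounds. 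This buys you a rigorous multi-round bound ($n$ times the single-round advantage) at the cost of needing HWA and PAT in a state-robust form quantified over adversarially chosen initial states --- a strengthening the paper implicitly assumes (its game lets $\mathcal{A}$ choose $\mathcal{D}$, and the per-round snapshot the adversary receives is the full device state, so trace indistinguishability already subsumes indistinguishability of the resulting states). Your version is the more defensible one; the paper's reads as the single-round core of your hybrid step.
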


\begin{proof}
HWA straightforwardly ensures that even if the same logical locations in
$V_{hid}$ are written by two access patterns in two rounds, their write
traces are independent.  Otherwise, an adversary could win one of the games
by observing the same write traces for the same writes to $V_{hid}$ in
subsequent rounds.

Also, in absence of PAT, an adversary could win PD-CPA by providing
$\mathcal{O}_0$ and $\mathcal{O}_1$ with the {\em same} public data
operations but with and without writes to $V_{hid}$ respectively.  Thus, PAT
is a necessary condition to ensure PD-CPA security.

We now show that HWA and PAT are also sufficient for PD-CPA.
First, note that HWA ensures that writes to locations in hidden volume $V_{hid}$ map to
physical locations selected independently of their corresponding logical
locations.  Logical and physical locations are uncorrelated.  An adversary
cannot determine the logical locations corresponding to observed modified
physical locations.

Second, observe that in the context of the PD-CPA game, PAT ensures that traces due to
combined writes to $V_{hid}$ and $V_{pub}$ can be attributed to writes
corresponding to $V_{pub}$ only.  And, as the writes to $V_{pub}$ are the
same in both the sequences, the adversary cannot distinguish between the
write traces non-negligibly better than guessing.

Now, consider a PD solution, $S$ which provides both HWA
and PAT.  Also, consider an adversary $\mathcal{A}$ that wins PD-CPA against
$S$ selecting two sequences $\mathcal{O}_0$ and $\mathcal{O}_1$.  Since,
$\mathcal{O}_0$ and $\mathcal{O}_1$ differ only in the writes to $V_{hid}$
(writes to $V_{pub}$ are the same), either of the following holds:

\begin{itemize}

\item $\mathcal{O}_0$ and $\mathcal{O}_1$ contain different writes to $V_{hid}$ and they are distinguishable from the corresponding 
write traces in direct contradiction to the HWA property of $S$.

\item $\mathcal{O}_0$ contains writes to $V_{hid}$ and $\mathcal{O}_1$ does
not contain writes to $V_{hid}$, and the corresponding write traces are
distinguishable. This implies that traces due to combined writes to
$V_{hid}$ and $V_{pub}$ in $\mathcal{O}_0$ are distinguishable from traces
with only the {\em same} writes to $V_{pub}$ in $\mathcal{O}_1$, in direct
contradiction to the PAT property of $S$.

\end{itemize}

Note that ensuring HWA without PAT and vice versa is not sufficient for 
PD-CPA since either of the above two cases will allow the
adversary to win with non-negligible advantage.  

\end{proof}

\section{Access Pattern Indistinguishability} 
\label{ORAM}

Section~\ref{pd} shows that one of the necessary conditions to plausibly deny the existence  of a logical volume to a multi-snapshot adversary, 
is to ensure indistinguishability of hidden data write access patterns (HWA). 

A straightforward solution here is to use an oblivious RAM (ORAM) which
allows a client/CPU to hide its data access patterns from an untrusted
server/RAM hosting the accessed data.  Informally, ORAMs prevent an
adversary from distinguishing between equal length sequences of queries made
by a client to the server.  This usually also includes indistinguishability
between reads and writes.  We refer to the vast amount of existing
literature on ORAMs for more formal definitions
\cite{Stefanov:2013:POE:2508859.2516660,Goldreich:1996:SPS:233551.233553}.
\smallskip
\noindent
{\bf Write-ony ORAM.~}
As noted by Blass {\em et al.} \cite{Blass:2014:TRH:2660267.2660313}, an
off-the-shelf full ORAM is simply too powerful since it protects both read
and write accesses -- while for PD
only write access patterns are of concern.  In
this case, a write-only ORAM~\cite{cryptoeprint:2013:694,Blass:2014:TRH:2660267.2660313} 
can be deployed which
provides access pattern privacy against an
adversary that can monitor only writes.

\smallskip
\noindent
{\bf Logical vs. practical complexity.~}
ORAM literature traditionally defines access complexity as the number 
of logical blocks of data accessed per I/O. This allows optimizations 
by using logical blocks of smaller size~\cite{Shi_obliviousram,pathoram}.
However, it is important to note that standard off-the-shelf block-based storage devices can access data only in units of 
physical blocks (sectors). For example, accessing a 256 byte logical 
block still entails accessing the corresponding 4KB block from the disk.
Thus, in the  context of locally deployed  block-based storage devices, practical complexity needs to be measured
in terms of the number of physical blocks (sectors) 
accessed per I/O.

\smallskip
\noindent
{\bf HIVE-ORAM.~}
The most bandwidth-efficient write-only ORAM is the construction by Blass
{\em et al.} \cite{Blass:2014:TRH:2660267.2660313} (further referred to as
HIVE-ORAM).  HIVE-ORAM \cite{Blass:2014:TRH:2660267.2660313} maps data from
a logical address space uniformly randomly to the physical blocks on the underlying
device.  The position map for the ORAM is recursively stored in
$\order{logN}$ smaller ORAMs, a standard technique introduced in
\cite{Shi_obliviousram}. The recursive technique reduces the logical block access complexity for the position map 
by storing the position map blocks in logical blocks of {\em smaller} sizes.
Under this assumption, HIVE-ORAM \cite{Blass:2014:TRH:2660267.2660313} accesses a {\em constant} number
of logical blocks per ORAM operation at the cost of some overflow that is
stored in an in-memory stash. 

Unfortunately, as noted in~\cite{Blass:2014:TRH:2660267.2660313}, 
with physical blocks of uniform size, HIVE-ORAM has a practical block read complexity (number of
blocks read) of $\order{log_{\beta}N}$ and a practical block write complexity (number
of blocks written) of $\order{log_{\beta}^{2}N}$ where $\beta= B/2*addr$, $B$
is the physical block size in bytes and $addr$ is the size of one logical/physical
block address in bytes. This is because to perform a write, HIVE-ORAM~\cite{Blass:2014:TRH:2660267.2660313} 
needs to update all $log_{\beta}N$ position map ORAMs recursively and updating each ORAM 
requires $\order{log_{\beta}N}$ accesses to find free blocks. More specifically, 
to find free blocks in an ORAM, a constant number of randomly chosen
 physical blocks are selected and then the position map for that ORAM is checked to determine free blocks 
 within the sample. Consequently, with $\order{log_{\beta}N}$ block read complexity of each position map ORAM, 
 the overall write complexity of HIVE-ORAM is $\order{log_{\beta}^{2}N}$.

\smallskip
\noindent 
{\bf DL-ORAM.~}
We propose \sysoram~, a new efficient write-only ORAM scheme with a practical block write
complexity of $\order{log_{\beta}N}$. Similar to \cite{Blass:2014:TRH:2660267.2660313}, \sysoram~ maps 
blocks from the logical address space to uniformly random blocks in the 
physical address space. The \sysoram~ construction however is significantly different and 
incorporates two key optimizations. 

First, \sysoram~ eliminates the need for recursive position map ORAMs by  
storing the position map as a B+ tree, indexed by the logical block addresses.  
The tree is stored along with the data within the same ORAM address space, thus judiciously 
utilizing space.  
Second, \sysoram~ uses a novel $\order{1}$ scheme for identifying uniformly random free blocks using
an auxiliary encrypted data structure.  This allows writes with
$\order{log_{\beta}N}$ communication complexity.  We detail below.


\subsection{Position Map}
In \sysoram~, all blocks are stored encrypted with a semantically secure cipher. Further, \sysoram~ stores the logical to
physical block address mappings in a B+ tree indexed on the logical
data block address. Logical and physical addresses are 
assumed to be of the same size. The position map tree is stored on the
device with {\em each} node stored in an {\em individual} physical block. Each 
{\em leaf} node stores a 
sorted list of logical block addresses along with the corresponding 
physical block addresses they are mapped to. 

Further, the leaf nodes are ordered from left to right, e.g., the left-most leaf node
contains the entries for logical block addresses in the range of 1 to $\beta$.
This ensures that the $(i/\beta)$-th leaf node always contains the mapping for logical block address 
$i$. If a logical block is currently not mapped (e.g. 
 when the block hasn't been written to yet by the upper
layer), the entry corresponding to that address in the map is set to null. 

For traversal, each 
internal node stores only the list of physical block address of the node’s
children. Note that since the leaves are ordered on logical addresses, the index within an internal node 
determines the child node to visit next in order to retrieve a path corresponding to a particular 
logical block address. Searching for a particular logical block 
 address mapping requires reading all the
blocks along a path from the root to the leaf that contains the
mapping for that logical block ID. As each node is stored in a physical block,  
the number of block addresses that can be 
stored in each leaf node is bound by the physical block size, $\beta$. 
Consequently, the depth of the tree is bounded by
$log_{\beta}(N)$ with a fanout of $\beta$.  Thus, querying the map for a 
particular logical to physical block address mapping requires $log_{\beta}(N)$ 
block accesses.

\begin{figure}
\centering
 \includegraphics[scale=0.18]{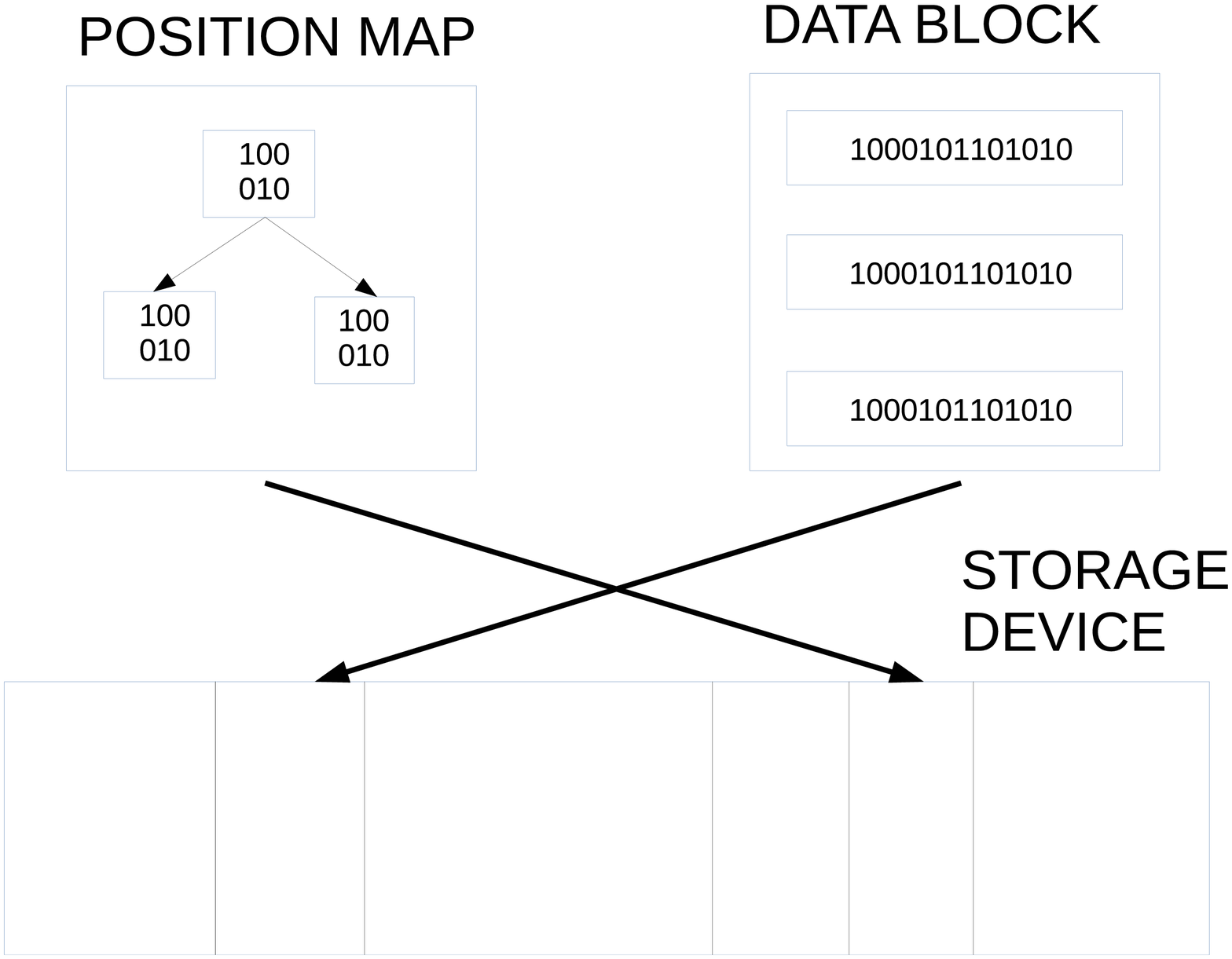}
 \vspace{-0.3cm}
 \caption{DL-ORAM. The position map blocks are randomly interleaved with the data blocks within the same address space.
 \label{DL_ORAM}}
 
\end{figure}

%
The position map shares the same physical address space with 
the ORAM data blocks. Specifically, the B+ tree blocks are assigned a logical block address 
and are written to random physical blocks interleaved with the data blocks,
using the ORAM protocols. 
The physical location of the tree root is stored at a known location outside 
the ORAM address space (intuition explained later). Semantic security of the cipher used ensures
that the position map blocks are indistinguishable from the data blocks in
the ORAM. 


\sysoram~  supports two operations: $read\_oram$ and $write\_oram$. Detailed 
pseudocode can be found in the Appendix.

\smallskip
\noindent
$\mathbf{read\_oram(id)}$ returns the data in the block with logical block address $id$. It 
locates the mapping for $id$ in the B+ tree and returns the data stored in the corresponding physical block. 

\smallskip
\noindent
$\mathbf{write\_oram(id,d)}$ writes data $d$ to the block with logical block address 
$id$. It first determines the entry corresponding to $id$ in the 
B+ tree and then writes data $d$ to 
a new free block. Finally, the map is updated corresponding to $id$. 
The mechanism for finding a free block is described next in 
Section~\ref{ORAM:findingfree}

Finally, updating the map requires recursively writing all the nodes along the 
the specific path of the B+ tree to new free blocks to ensure indistinguishability between 
map blocks and data blocks. Specifically, the updated leaf node (after adding the entry for $id$) is 
written to a selected new free block. This results in
its parent node being modified and being mapped to a new free location. 
Consequently, on an ORAM write, all the blocks from the root of
the map to the corresponding leaf node are modified and remapped to new
locations. 

To ensure that the recursion terminates, 
the physical location of the tree root is always stored at a known fixed location as described before. 
For \sysoram~ , this information is held at a fixed
location on the disk where the physical block address of the root is written encrypted and 
is modified for each access. Once the root data is also modified and remapped to a new 
physical block, this information is updated. 

\subsection{Finding Free Blocks}
\label{ORAM:findingfree}
%
A major challenge for write-only ORAMs is selecting 
uniformly randomly free block from the distribution
of all free blocks on the device for writing data. Writing thusly eliminates
correlations between the logical data block addresses and their
physical locations, thus rendering an adversary incapable of
linking modifications to physical blocks with the corresponding 
modifications to logical blocks. Even when data already in
the ORAM needs to be updated, it is relocated to a new
random location to prevent correlation with previous writes.


\sysoram~ deploys a new $\order{1}$ scheme for identifying
uniformly random free blocks using an auxiliary encrypted data structure. 
This allows writes with $\order{log_{\beta}(N)}$ access complexity. 
This is achieved by storing free block 
addresses in a novel encrypted data structure -- the {\em free block matrix} (FBM). The FBM 
is designed with the two following properties -- i) it allows retrieval of uniformly random 
elements in $\order{1}$ accesses, and ii) it does not reveal the actual number 
of elements that it contains at any given point in time. Ensuring (i) allows efficient 
retrieval of random free block addresses in our scheme (as we detail later). (ii) ensures 
that the actual number of data blocks in the ORAM is not revealed thus preventing
possible correlations.

\smallskip
\noindent
{\bf Free Block Matrix (FBM).~}
The FBM is an encrypted $\beta \times N/\beta$  matrix (where $N$ is the total number of
physical blocks in the ORAM and $\beta$ is number of physical block addresses that can be 
written to one disk block) that stores the addresses of all currently free blocks. 
The columns of the matrix are stored in {\em consecutive}
disk blocks outside the ORAM, each containing $\beta$ free block addresses (Figure
\ref{fbm_design}). The coordinates in the matrix where each block address is 
stored is randomized independent of the address. 
Since all blocks are free at initialization, the FBM is
initialized with all $N$ entries containing a randomly chosen disk
block address.

\begin{figure}
\centering
\includegraphics[scale=0.18]{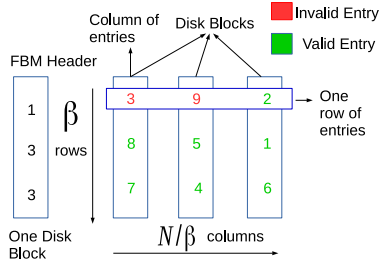}
\caption{FBM design. The FBM is a matrix with $\beta = B/addr$ rows and $N/\beta$ columns, containing $N$ entries. 
 Each entry of the matrix is a 
 physical block ID. Each column of the FBM is stored in a disk block. 
 A special ``FBM header'' array is in  stores the number of physical block IDs per row. The FBM requires $N/\beta$  + 1 disk blocks for storage -- 
 $N/\beta$ columns and 1 FBM Header. The figure illustrates an example FBM configuration with 9 entries. The first row has two invalid entry as indicated 
 by the FBM header. \label{fbm_design}}
\end{figure}

As physical blocks are used for writing data, the corresponding block addresses need 
to be removed from the FBM. This is achieved by {\em invalidating} entries (described next)
when the corresponding blocks are used for ORAM writes. In this case, 
to track the number of valid entries in the FBM, the {\bf FBM header} block 
stores a block-sized array tracking the number of valid entries per FBM {\em row}. 

More specifically, the FBM header contains an entry for each row indicating the number 
of valid entries currently present in that row. Since each row can contain $N/\beta$ valid entries (corresponding 
to the number of columns), each entry in the FBM header is of 
size $log(N/\beta)$. For $\beta$ rows, the total number of entries in the FBM header is 
$\beta \times log(N/\beta)$. Also, $\beta <= B/logN$ with $B$ as the block size and 
each physical address of length at least equal to $logN$. Then for $\beta > 1$, 
\begin{center}
$\beta \times log(N/\beta) \leq \beta \times logN \leq  B/logN \times logN \leq B$
\end{center}

Hence, the FBM header always fits in one physical block if $B > logN$. 
With standard 4KB block size on off-the-shelf storage disks,
the FBM header requires more than one block ($B < logN$) only if the disk has more than $2^{4096}$ blocks in total

Figure~\ref{fbm_design} describes the FBM design.

\begin{figure}
\centering
\includegraphics[scale=0.20]{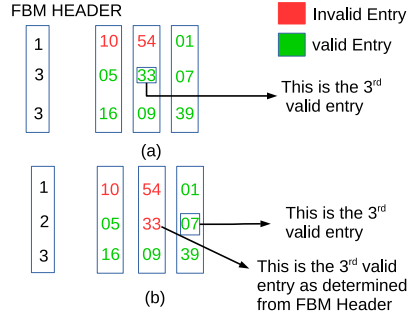}
\caption{Example of FBM with 9 entries. In (a), the FBM header correctly determines that the $3^{rd}$ valid entry is at (2,2) since there is 1 
valid entry in the first row and 3 in the second. In (b), the $3^{rd}$ valid entry is determined incorrectly since the FBM header correctly 
indicates that the second row has 2 valid entries but does 
not show which columns have the valid entries.\label{fbm_prob}}
\end{figure}




\smallskip
\noindent
{\bf Selection from the FBM.~}
The FBM allows selection of uniformly random block addresses with 
two disk block accesses as follows --
select a random $i$ in the range of the total number of valid entries 
in the FBM. Then, determine the coordinate of the $i^{th}$ entry in the FBM
by counting valid entries {\em row-wise} -- this is straightforward since 
the header stores the number of valid entries per row. The block address 
at that coordinate is then retrieved from the corresponding location.

If the block is subsequently used (possibly for a new write), it is {\em invalidated} by reducing the valid entry
count for the corresponding row in the FBM header. Consequently, for subsequent accesses the FBM header indicates that 
the row has one less valid entry. Note that invalidating an entry does not entail removing it 
from the disk block storing the FBM column, rather the FBM header is updated to ensure 
that the entry is not used in subsequent accesses. 

An important condition for correctness of this mechanism is to  ensure that 
a location determined from the FBM header always contains a valid entry.
A possible scenario where this might be violated is shown in Figure~\ref{fbm_prob} where 
the FBM incorrectly points to a location that contains an invalid entry. The problem 
in this case is due the presence of invalid entries between valid entries (as in Figure~\ref{fbm_prob}(b)). 
Here, the FBM header correctly indicates that the second row has 2 valid entries but does 
not indicate the columns with the valid entries. Determining the exact coordinate at which the 
valid entry is present is non-trivial given that the header only records the {\em number} of valid entries in a row but {\em not} 
their corresponding locations. Consequently, with {\em gaps} between valid entries in the FBM (as in Figure~\ref{fbm_prob}), 
the selection mechanism described above can erroneously return block addresses (corresponding to invalid entries) 
that are already being used.
 
 \begin{figure}
 \centering
  \includegraphics[scale=0.18]{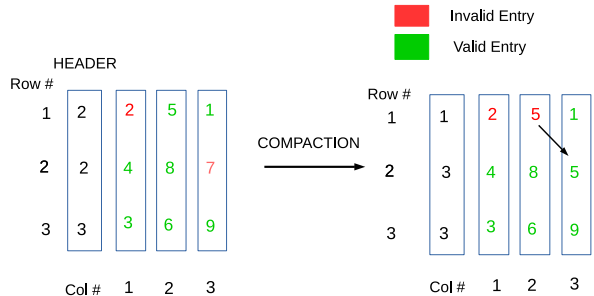}
  \caption{Compacting the FBM. Invalid entries appear before valid entries when entries are traversed {\em row-wise}. 
  In the example, 7 was selected randomly from (2,3) which was replaced by 5 from (1,2) to ensure compactness.\label{compaction_example}} \end{figure}

\smallskip
\noindent
{\bf Compacting the FBM.~}
To ensure correctness, \sysoram~ maintains the following invariant --
all {\bf invalid} entries in the FBM
appear {\bf before} all {\bf valid} entries when 
the FBM is traversed {\em row-wise}.
 This requires all valid entries in a particular row to be invalidated 
first before invalidating an entry from the next row in sequence -- e.g., all entries in the first row are invalidated
before the second row etc. 
Since, entries are selected randomly from the FBM, \sysoram~ performs an extra {\em compaction} step 
to maintain the invariant.

The {\em compaction} replaces a randomly selected valid entry from the FBM 
by another valid entry which is selected {\em row-wise} from the FBM.
This replacement entry is the {\em first} valid entry that is encountered while 
counting valid entries per row sequentially 
from the FBM header. After the replacement, the FBM header is updated.

For example, in Figure~\ref{compaction_example}, 7 is uniformly randomly selected (and invalidated) 
from coordinate (2,3) in the FBM using the protocol described above. This leads to a configuration that
violates the invariant. Then for compaction, 5 is copied from coordinate (1,2) to (2,3) and the entry at (1,2) is invalidated. 
Note that the entry at (1,2) is the {\em first} valid entry encountered while traversing the FBM row-wise. 
Also, even though the FBM now contains duplicate block addresses at (1,2) and (2,3), the block address 
at (1,2) will not be used subsequently during uniform selection since the FBM header correctly indicates 
that row 1 has one valid entry. Due the invariant, this single valid entry has to be present at (1,3).

Compaction straightforwardly ensures the invariant since randomly selected entries are replaced by 
entries that are selected {\em row-wise} and subsequently invalidated. 
To summarize, the compaction modifies two blocks -- the header and the block from which the random entry is 
selected and replaced. 

Moreover, in certain cases this compaction is not required. For instance, when a free block address 
is to be added to the FBM and a new random free block address is required in turn. Recall that this is 
indeed the case for updates to \sysoram~ blocks -- the updated data is written to a new free block while 
the block containing the previous data is now free and the corresponding free block address can 
be added back to the FBM. In this case, the new randomly chosen free block address can be directly
replaced by the free block address that is to be added back to the FBM. Note that due to the direct replacement, 
the header does not need to be updated. 

In order to ensure that 
such an access looks indistinguishable from an access with compaction -- otherwise an adversary could 
identify accesses which are simply updates to existing blocks and thus deduce the actual number of data blocks 
in the ORAM -- the FBM header is reencrypted along with the modification to the block where the replacement takes place. 
Due to semantic security, reencrytion of the header is indistinguishable from actual modification during compaction. 

Finally, note that entries are added back to the FBM only as part of 
updates (as described above). \sysoram~ does not support deletes 
since modern filesystems do not indicate deletes to block devices. Specifically, FS such as ext4 only 
update its internal metadata to track blocks that have been deleted while the deleted data is 
overwritten only when the block is subsequently allocated for writing new data. Thus, when \sysoram~ 
is used with an overlying filesystem mounted logical volume (the deployment scenario considered here),
deletes are logically equivalent to updates and the deleted data is updated with new data in future accesses.


\smallskip
\noindent
{\bf Uniform free block selection}
\label{oram:free_block}
As discussed, the FBM provides an efficient mechanism for determining uniformly random 
free block addresses for ORAM writes with $\order{1}$ access
complexity. 
Unfortunately, writing to blocks only selected from the FBM for every write, can result in 
certain blocks in the ORAM never being modified. 
For instance, consider a block containing data that is never updated. 
This block is never subsequently modified in the ORAM after the first write since 
the block never becomes free again. 
This can leak subtle correlations since an adversary can differentiate between disk blocks 
that are being updated frequently from locations that are not updated since they 
were first written. 

To solve this, \sysoram~ deploys an $\order{1}$ free block selection scheme using the FBM, that 
modifies uniformly random locations in the ORAM and thus prevents any block level correlations. 
The intuition is to always modify $k$ (a chosen constant) random disk blocks while 
also finding a free block. 

For every ORAM write, \sysoram~  creates two sets, each with $k$ items as follows -- 

\begin{enumerate}
\item A {\em free set} that contains $k$ randomly chosen free block addresses from the FBM.
\item A {\em random set} that contains $k$ randomly 
chosen block addresses out of all the blocks in the ORAM.  
\end{enumerate}

In case of duplicate items between the sets, one of the duplicates
is randomly discarded while a new item is selected either to the free set 
or the combined set depending on the set from which the item was discarded.
Then, items from the two sets are merged randomly to create a {\em combined set}. 

\smallskip
\noindent
{\bf Selection Protocol.~}
Using the combined set \sysoram~ , executes the following protocol --

\begin{enumerate}
\item Select an item randomly from the combined set.
\item If the item, $i$ selected in step 1 originally belonged to the {\em free set}, 
use the corresponding block for the
 ORAM write. Otherwise, reencrypt the block corresponding to the 
block address selected. Remove the item from the combined set.
\item Repeat steps 1-3, $k$ times.
\item For all the remaining items, $i$, that are also in the {\em free set}, replace the addresses back in the FBM from where they were selected. 
\end{enumerate}

\begin{lemma}
\label{lemma_rand_ind}
The sample of $k$ physical blocks modified for every ORAM write
(due to the selection protocol) is indistinguishable from a sample of $k$ blocks 
selected uniformly at random out of the $N$ physical blocks in the ORAM.
\end{lemma}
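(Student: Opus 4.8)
The plan is to decompose the claim into a \emph{content} part and a \emph{location} part and then recombine them. For the content part I would invoke the semantic security of the cipher: every one of the $k$ modified blocks is overwritten by a fresh ciphertext, whether it is a free-set item receiving genuine data (an actual write) or a random-set item that is merely re-randomised (a reencryption). Thus no efficient adversary can tell a write apart from a reencryption, and I would make this precise with a short hybrid argument that replaces the genuine writes by reencryptions of dummy data one at a time, each step reducing to semantic security. In the final hybrid all $k$ modifications look like reencryptions, so the only remaining distinguishing information is \emph{which} $k$ addresses were touched.

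The heart of the proof is then the distribution of these $k$ touched addresses. Here I would reason over the three sources of randomness: the free set, drawn uniformly from the free blocks via the $\order{1}$ uniform-selection property of the FBM (Section~\ref{ORAM:findingfree}); the random set, drawn uniformly from all $N$ physical blocks; and the uniform choice of which $k$ of the $2k$ combined items are actually modified. The target is to show that, both marginally and jointly, the $k$ modified addresses have the same law as $k$ blocks sampled uniformly without replacement from the $N$ addresses. I would begin by fixing an arbitrary address, computing its probability of landing in the modified set under the protocol, and comparing it with the uniform value $k/N$; I would then lift this to the joint law by a symmetry argument, crucially using the fact (also from Section~\ref{ORAM:findingfree}) that the FBM never reveals the number of free blocks, so the partition of the $N$ addresses into free and used ones stays hidden throughout.

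The step I expect to be the main obstacle is exactly this location analysis, because the protocol is \emph{not} symmetric in free and used addresses: a free address may enter the combined set through either the free set or the random set, whereas a used address can enter only through the random set. Taken naively this over-represents free addresses, so proving uniformity requires showing that the $2k \to k$ sub-selection together with the duplicate-resolution rule re-balances the two routes exactly onto the uniform distribution. I would first isolate this bias explicitly as a function of the free fraction $f/N$ and check whether the merge-and-select step cancels it; should an exact cancellation fail, the fallback is to bound the resulting statistical distance and argue it is negligible in the deployment regime, or to defer to the refined construction of Section~\ref{oram_pets_fix}. This re-balancing is the delicate technical core on which the indistinguishability claim ultimately rests, and I would treat it as such rather than as a routine calculation.
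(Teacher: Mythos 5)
Your core calculation is the same one the paper performs: fix an arbitrary physical block $x$, compute the probability that the protocol modifies it, and compare with the uniform value $k/N$. The paper gets there by setting $\Pr[E_1]=\Pr[x\in\text{free set}]=\frac{k}{f}\cdot\frac{f}{N}=k/N$, $\Pr[E_2]=\Pr[x\in\text{random set}]=k/N$, showing each of the $k$ draws from the $2k$-element combined set hits $x$ with probability $1/(2k)$ so $\Pr[Y]=1/2$, and concluding $\Pr[X']=\Pr[Y]\cdot\Pr[E_1\cup E_2]=\frac{1}{2}\cdot\frac{2k}{N}=k/N$. What you add beyond the paper is the content/location decomposition with an explicit hybrid over semantic security (the paper defers that entirely to Theorem~\ref{thm_security_oram}) and the observation that one must also control the joint law of the $k$-subset, not just the per-block marginal; the paper proves only the marginal.

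The point you flag as the delicate core --- that free addresses have two routes into the combined set while occupied addresses have only one, so the naive count over-represents free blocks --- is exactly the step the paper's proof elides: writing $\Pr[E_1\cup E_2]=2k/N$ silently treats $E_1$ and $E_2$ as disjoint, which they are not for a free block, and the duplicate-resolution rule is never actually analyzed. Your suspicion that the cancellation is not exact is vindicated by the paper itself: Section~\ref{oram_pets_fix} concedes that the Section~\ref{ORAM:findingfree} mechanism is biased toward free blocks, giving an adversary an $\order{1/N}$ distinguishing advantage over multiple rounds, and repairs it by sampling the two halves of the combined set from the provably disjoint FBM and N-FBM (Lemma~\ref{lemma_rand_ind_fix}). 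So your proposed fallback --- bound the statistical distance or defer to the refined construction --- is precisely the resolution the authors adopt; the one thing to state plainly is that the lemma as literally written (exact indistinguishability for every write) does not hold for the original protocol, only up to that $\order{1/N}$ bias.
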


\begin{proof}
The idea here is to show that the probability of a block $x$ being selected (and modified) randomly in a sample of $k$ 
blocks out of $N$, is the same as the probability of $x$ being one of the $k$ blocks chosen by the selection 
protocol. Let $X$ denote the event of $x$ being chosen uniformly at random in a sample of $k$ blocks out of $N$. Then, 
straightforwardly $Pr[X] = k/N$. 

Next, consider the process of building the {\em free set}.
Let $Pr[x \in FBM]$ denote the probability that $x$ is currently in the FBM. Since the 
FBM is initialized with all $N$ block addresses and for each access, entries are selected (and invalidated) uniformly 
randomly, all $N$ block addresses are equally likely to be present and valid in the FBM 
during the current access. More specifically, all the $N$ initial entries are equally likely to have been invalidated 
in the writes that have preceded this access. Thus, $Pr[x \in FBM] = f/N$ when $f$ is the current number of valid entries 
in the FBM. 

Let $E_1$ be the event that $x$ is selected into the {\em free set} -- $x$ is 
one of $k$ randomly chosen block addresses selected from the FBM to form the free set.
This conditionally depends on $x$ being present in the FBM. Thus, $Pr[E_1] = Pr[x \in k | x \in FBM] = Pr[x \in k ] \times Pr[x \in FBM] = k/f \times f/N = k/N$ 
since the event that $x$ is selected as one of the $k$ items from the FBM is independent 
of $x$ being present in the FBM.

Now, since the {\em random set} is created by selecting $k$ random block address out of all $N$, 
the event that $x$ is selected into the random set, $E_2$ straightforwardly has the probability, 
$Pr[E_2] = k/N$. Thus, $x$ has equal probability of being added to the {\em combined set} from 
either the {\em free set} or the {\em random set}. Note that the combined set has size $2k$.

Let $X^{'}$ be the event of $x$ being selected for modification by the protocol. The goal here 
is to show $Pr[X] = Pr[X^{'}]$. 
$X^{'}$ depends on $x$ being chosen either to free set or the random set ($E_1$ and $E_2$) {\em and} 
$x$ being selected in one of the rounds of the protocol, denoted by event $Y$. Thus, $Pr[X'] = Pr[Y] \times Pr[E_1 U E_2]$.

Let $Y = Y_1 + Y_2 + \ldots + Y_k$ where $Y_i$ is the event that $x$ is selected in the 
$i^{th}$ round of the protocol. Also, since $x$ can be selected only once (selection without replacement), probability 
of $x$ being selected in the $i^{th}$ round conditionally depends on $x$ not being selected in previous rounds\footnote{Y follows a hypergeometric 
distribution with $2k$ population size, $k$ draws and 1 observed and possible success state. Thus, the following could also be derived 
straightforwardly using the probability mass function for the distribution.}. Then, 

\begin{center}
 $Pr[Y = Y_i] = Pr[Y_i | Y \neq Y_{i-1},Y_{i-2},\ldots,Y_1]$
\end{center}

Note that for round $i$, an item is selected from the combined set with probability $1/(2k - i)$ since previously selected items are removed without replacement.

Now, $Pr[Y = Y_1] = 1/2k$.

\begin{center}
$Pr[Y = Y_2] = Pr[Y_2 | Y \neq Y_1] = 1/(2k-1) \times Pr[Y \neq Y_1] = 1/2k$.
\end{center}

It can be similarly shown that $Pr[Y = Y_i] = 1/2k$ $\forall i \in k$. 
Thus, 

\begin{center}
$Pr[Y] = \sum\limits_{i = 1}^{i=k}Pr[Y = Y_i] = 1/2$.
\end{center}

Finally, $Pr[X^{'}] = Pr[Y] \times Pr[E_1 U E_2] = 1/2 \times 2k/N = k/N$, thus proving $Pr[X^{'}] = Pr[X]$.
\end{proof}


\smallskip
\noindent
{\bf Stash.~}
If none of the $k$ rounds yields a free block, 
the data is written to an in-memory stash. Since, initially there are equal number 
of items from the {\em free set} and the {\em random set} in the 
combined set and uniformly random items are selected each round, $k$ rounds of the protocol yields an expected $k/2$ number of 
items from the {\em free set}.
If $k = 3$, the \sysoram~ stash can be modeled as a {\em D/M/1} queue 
similar to \cite{Blass:2014:TRH:2660267.2660313}, and bound 
to a constant size with negligible failure probability in the security 
parameter. We refer to \cite{Blass:2014:TRH:2660267.2660313} for details.

\smallskip
\noindent
{\bf Device utilization.~}
A final detail is to ensure that the expected 
number of free blocks obtained from the selection protocol does not exceed the 
expected number of free blocks obtained from an equal-sized sample selected randomly out of 
all the blocks in the ORAM. 
Otherwise, the protocol will always yield an expected 
$k/2$ free blocks unlike a randomly selected sample where the expected number 
of free blocks will be a function of the actual distribution of data. Note that 
lemma~\ref{lemma_rand_ind} shows indistinguishability between a randomly selected sample and 
the sample obtained due to the modification in Step 2 of the protocol for {\em each} write. However, 
a significant difference in the expected number of free blocks obtained can leak subtle correlations 
over {\em multiple} rounds. Fortunately, a straightforward solution for this is to 
ensure that half of the ORAM blocks are always free. Note that a similar assumption is also 
made by HIVE~\cite{Blass:2014:TRH:2660267.2660313} but for a different purpose, namely to bound the 
in-memory stash for the write protocol.

\begin{lemma}
If half of the ORAM blocks are always ensured to be free, the expected number of free blocks provided 
 by the selection protocol per write is equivalent to the expected number of free blocks in a  
 sample of $k$ blocks randomly selected out of all $N$ blocks in the ORAM.
\end{lemma}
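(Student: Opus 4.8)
The plan is to evaluate the two expectations appearing on the two sides of the claimed equality independently and observe that, under the hypothesis that half the blocks are free, both reduce to $k/2$. I would begin with the right-hand side: the expected number of free blocks in a sample of $k$ blocks drawn out of the $N$ physical blocks. Letting $I_j$ indicate that the $j$-th sampled block is free, the assumption that exactly $N/2$ of the $N$ blocks are free gives $Pr[I_j = 1] = (N/2)/N = 1/2$ for each $j$ by symmetry (this holds whether the sample is taken with or without replacement, by exchangeability of the sampled positions). Linearity of expectation then yields $\sum_{j=1}^{k} Pr[I_j = 1] = k/2$. This is the step where the hypothesis ``half the blocks are free'' enters, and it is immediate.

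Next I would compute the expected number of free blocks the selection protocol actually supplies per write. The crux is to pin down what counts as a ``provided'' free block: by Step 2 of the selection protocol a selected item yields a usable free block exactly when it originated from the \emph{free set}, whereas items drawn from the \emph{random set} are merely re-encrypted. Accordingly I would argue that, after the duplicate-handling rule, the combined set consists of $2k$ distinct addresses of which exactly $k$ are free-set items (each guaranteed free, having been read from the FBM) and $k$ are random-set items, and that the protocol selects $k$ of these $2k$ addresses uniformly without replacement. The count of free-set items among the $k$ selected is then hypergeometric with population $2k$, $k$ successes, and $k$ draws, whose mean is $k \cdot \tfrac{k}{2k} = k/2$. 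More cheaply, I can reuse the fact established in the proof of Lemma~\ref{lemma_rand_ind} that any fixed element of the combined set is selected in one of the $k$ rounds with probability $Pr[Y] = 1/2$; summing this over the $k$ free-set items by linearity again gives $k/2$. Matching the two values of $k/2$ closes the argument.

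The main obstacle is not the arithmetic but the bookkeeping in the second step. I must justify that the combined set genuinely contains exactly $k$ guaranteed-free addresses and $k$ random addresses after duplicate removal, and I must fix the interpretation of ``free blocks provided'' as the \emph{selected free-set items} rather than the incidentally-free members of the random set, which are only re-encrypted and never used as write targets. Once these definitions are settled, both expectations collapse to $k/2$ by the same symmetry-plus-linearity reasoning already exploited for Lemma~\ref{lemma_rand_ind}, and the desired equality follows.
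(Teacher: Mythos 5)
Your proposal is correct and follows essentially the same route as the paper: both sides are computed independently and shown to equal $k/2$, the right-hand side by linearity under the half-free hypothesis and the left-hand side by counting selected free-set items among the $k$ draws from the combined set. If anything, your treatment is more careful than the paper's one-line argument, which asserts the rounds are ``independent'' when they are in fact draws without replacement --- your hypergeometric-mean (equivalently, linearity over the $k$ free-set items using $Pr[Y]=1/2$) computation fixes that imprecision while reaching the same $k/2$.
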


\begin{proof}
 Note that a free block can be obtained by the protocol only if an item is selected from the {\em free set} in Step 1, for 
 at least 1 out of the $k$ rounds.  Since, the rounds are independent, it can be straightforwardly 
 observed that the expected number of free blocks yielded 
 by the selection protocol is $k/2$. This is equal to the expected number of free blocks 
 in a randomly selected sample of $k$ blocks, if half of the blocks in the ORAM are always free.
\end{proof}

\smallskip
\noindent
{\bf Access Complexity.~}
Both $read\_oram()$ and $write\_oram()$ access the ORAM map to locate the target
block. The complexity of accessing an entry in the B+ tree is $\order{log_{\beta}N}$.
Further, $write\_oram()$ needs to find and write to $\order{log_{\beta}N}$ 
free blocks for writing data and updating 
the map blocks. Finding a free block requires $\order{1}$ accesses as discussed above,
and thus the write complexity of \sysoram~ is $\order{log_{\beta}N}$. Consequently, the overall access complexity of \sysoram~  is 
$\order{B\times log_{\beta}(N)}$.

%
%


\subsection{Security Over Multiple Rounds}
\label{oram_pets_fix}
\begin{figure}
\centering
 \includegraphics[scale=0.15]{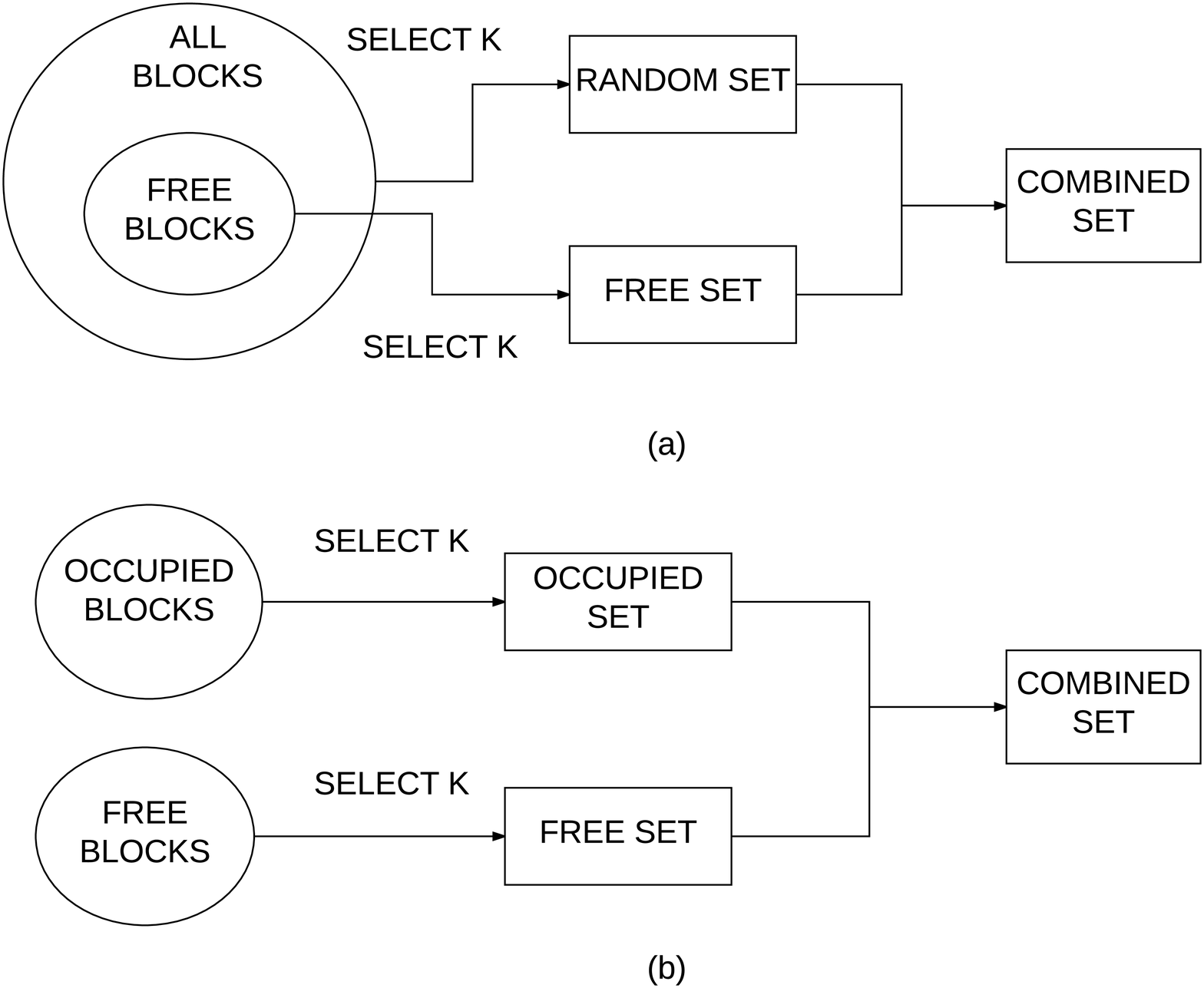}
\caption{Free block selection. The free block selection described in (a) is biased towards selecting 
more free blocks into the combined set since the set of free blocks is a subset of all blocks on the disk, 
which in turn is used for selection into the random set. A non-biased selection mechanism (b) 
selects equal number of block out of the occupied blocks and the free blocks into the 
combined set. The set of occupied blocks and free blocks are disjoint by construction.\label{selection_fig} 
}
\end{figure}

As per a recent email conversation with Roche {\em et al.} \cite{detworam}, an adversary 
can mount an attack on \sysoram~ over 
multiple rounds of execution based on the observation that the
free block selection mechanism described in Section \ref{ORAM:findingfree}
is slightly biased towards selection of free blocks (Figure~\ref{selection_fig}(a)). In particular, 
the set of all free blocks is a subset of all blocks on the disk. Since, an 
equal number of elements is selected into the combined set from the
free set (sampled uniformly out of the set of free blocks) 
and the random set (sampled uniformly out of all blocks on the disk), a free block is more likely to be selected into 
the combined set than an occupied block. In fact, the adversary's advantage in guessing whether a block is free or occupied, 
over multiple rounds of observation is $\order{1/N}$, which is non-negligible in the security parameter. However, note 
that for practical purposes, $1/N = \order{2^{-30}}$ is still significantly small for large storage devices. 

In order to mitigate this security leak, we propose a new selection protocol (Figure~\ref{selection_fig}(b)),
ensuring that all blocks are equally likely to be in the combined set, irrespective of their 
state of occupancy. More specifically, the combined set is now formed with equal number of elements 
uniformly sampled from the set of all free blocks and the set of all occupied blocks. Note that 
these sets are disjoint and are of equal size when the disk is full -- since \sysoram~ allows only 
half of the block to be occupied at any given time. Because the sets are disjoint, all blocks 
are equally likely to be selected to the combined set from the set of free blocks {\em or} 
the set of occupied blocks. 

\smallskip
\noindent
{\bf Initialization.~}
For this mechanism to work, the set of free blocks and the set of occupied blocks should always be of equal size. This 
can be easily achieved by running an {\em initialization} phase where half of the disk is 
written with random data. In this case, all future accesses to the disk will be substitutions 
of the {\em logical} blocks written during the initialization since \sysoram~ allows only 
half of the disk to be occupied. Note that an initialization step is a standard assumption for ORAM protocols
which assume that logical data (possibly random) is loaded into the ORAM to setup an initial 
configuration of associated data structures (such as the position map etc. \cite{pathoram}).

Deploying the new selection mechanism requires several changes to the scheme as we need to 
maintain the block addresses of all occupied blocks efficiently. This is achieved 
using a {\em non-free block matrix} (N-FBM), a structure similar to the FBM.

\smallskip
\noindent
{\bf Non- Free Block Matrix (N-FBM).~}
As discussed, the N-FBM is used to store  
block addresses of all currently occupied blocks 
at random locations. The N-FBM is an encrypted $\beta \times N/\beta$  matrix (where $N$ is the total number of
physical blocks in the ORAM and $\beta$ is number of physical block addresses that can be 
written to one disk block). 
The columns of the matrix are stored in {\em consecutive}
disk blocks outside the ORAM, each containing $\beta$ free block addresses (Figure
\ref{fbm_design}). The coordinates in the matrix where each block address is 
stored is randomized independent of the address. 
Contrary to the FBM, the N-FBM is empty at initialization. 
During the initialization step described above, the block 
addresses corresponding to the physical blocks being written 
are added to random locations in the N-FBM. Thus, 
after the initialization step, the N-FBM contains 
block addresses of all currently occupied blocks 
at $N/2$ random locations.

\begin{figure}
\centering
 \includegraphics[scale=0.15]{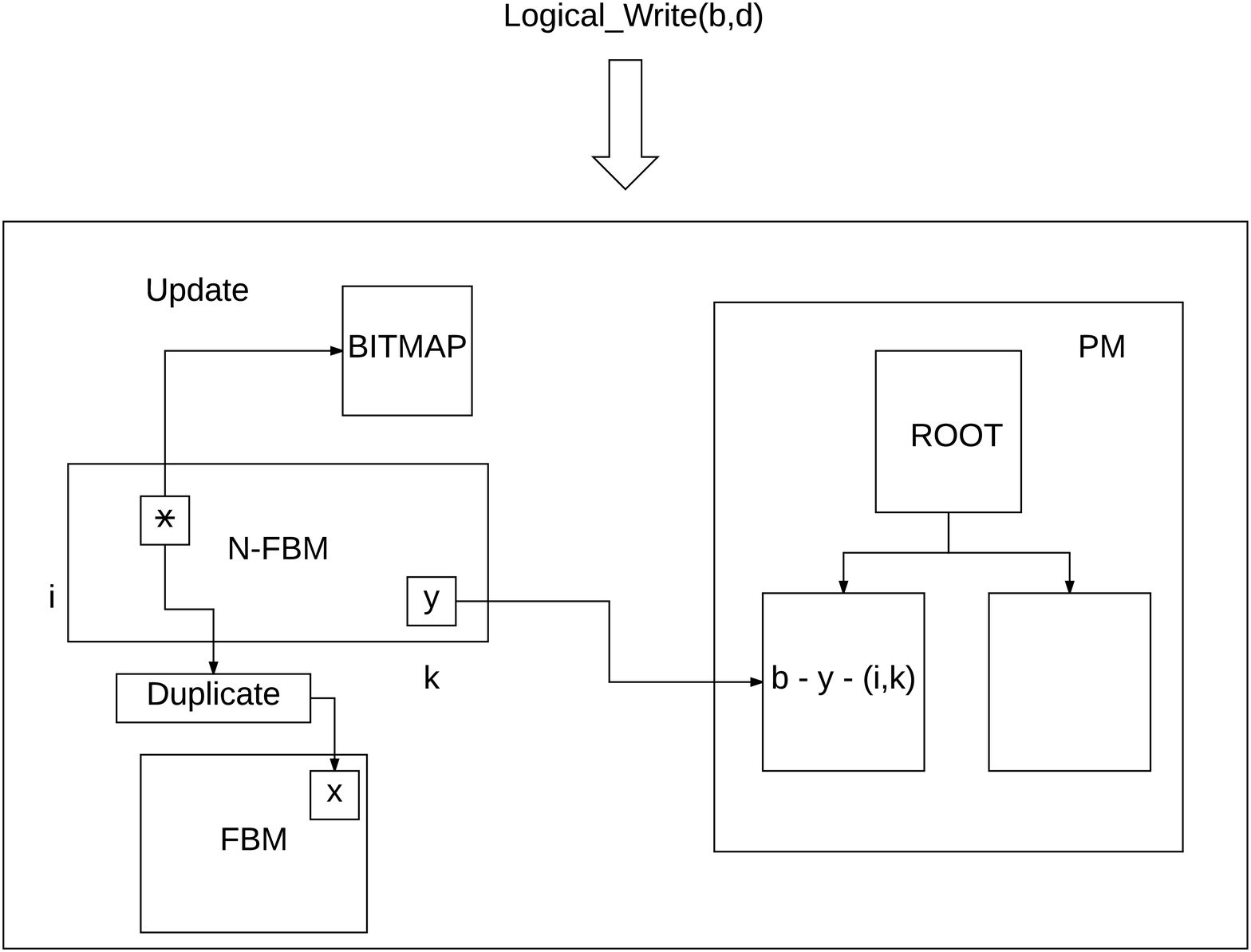}
\caption{The FBM and the N-FBM synchronization. On a logical write to block $b$ originally located at 
physical block $x$ on disk, a new free block $y$ is first selected from the FBM. $y$ is replaced with 
$x$ in the FBM followed by update to the position map (PM). The PM entry for $b$ is updated with location 
where $y$ is added to the N-FBM. Additionally, the PM entry for $b$ before this access contained 
the location where $x$ was added in the N-FBM previously. Once $x$ is added back to the 
FBM, its previous location in the N-FBM is marked as free in the bitmap to indicate a duplicate entry.\label{sel_desc}}
\end{figure}

\small
\noindent
{\bf Adding addresses to the N-FBM .~}
The challenge is to keep oblivious update the N-FBM, i.e. 
addresses corresponding to physical blocks used for writes after the initialization need to be added to the N-FBM securely. 
A naive solution of adding block addresses to deterministic locations in the N-FBM 
will allow the adversary to link these modifications and thus violate security. 
Consequently, block addresses have to be added to random locations thus generating random write traces to the N-FBM.

Fortunately, only $N/2$ locations in the N-FBM can contain valid physical block addresses 
of currently occupied blocks since only $N/2$ physical 
blocks out of all $N$ blocks on the disk can be occupied at any given time. 
The rest of the $N/2$ locations in the N-FBM 
are ``free'' and new block addresses can be added there. Consequently, a randomly selected location from the N-FBM 
will be free with probability $1/2$.  \sysoram~ tracks the locations in the N-FBM that do not contain
valid block addresses of currently occupied blocks using a bitmap (assumed to be in memory, 
later we show how to store on disk). More specifically, the bitmap tracks the locations 
in the N-FBM that are currently free.

Then, to find a free location in the N-FBM, a location is randomly selected and its corresponding bitmap entry is 
checked to determine its state of occupancy. In case, the location selected is not free, a new random location is selected
and the process is repeated until a free location is obtained. With security parameter, $\lambda$, 
the probability that a free location is {\em not} obtained even after repeating the process $\lambda$ times 
is $\order{2^{-\lambda}}$, which is negligible in $\lambda$. We however note 
that this  process is expensive and show how 
to perform this efficiently after describing our free block selection protocol.

\small
\noindent
{\bf Handling duplicates.~}
An obvious problem to adding block addresses to random free locations is that 
the N-FBM soon fills up with duplicate entries. Consider the following scenario 
where logical block $b$ is stored at physical block $y$. When $b$ is updated by a future access, 
it is moved to a new physical block $x$ (selected from the FBM) while $y$ is now free and 
can thus be added back to the FBM (Section~\ref{ORAM:findingfree}). Additionally, $x$ is added to a random location in 
the N-FBM. Note that although $y$ is free, an entry for $y$ still exists in the N-FBM as 
it must have been added on the previous access to $b$! This results in duplicate entries 
in the FBM and the N-FBM which breaks the assumption that the two sets are disjoint. 

In order to rectify this, \sysoram~ updates the N-FBM bitmap accordingly to indicate that the location 
previously occupied by $y$ is now free. \sysoram~ keeps track of this location using the 
position map. More specifically, when $y$ is added to the N-FBM, as part of a write to $b$, 
\sysoram~ additionally stores the location where $y$ is added to the N-FBM in the position 
map entry for $b$. Subsequently, when $y$ is freed (next access to $b$), \sysoram~ updates the bitmap to indicate 
that the corresponding location in the N-FBM is free. 

Figure \ref{sel_desc} describes the mechanism of adding blocks to the FBM and the N-FBM during free 
block selection.

\smallskip
\noindent
{\bf Free Block Selection.~}
The new free block selection protocol (Figure \ref{selection_fig} (b)) can be implemented straightforwardly using the 
FBM as the set of free block addresses and the N-FBM as the set of 
occupied block addresses as follows --

Creates two sets, each with $k$ items as follows -- 

\begin{enumerate}
\item A {\em free set} that contains $k$ randomly chosen free block addresses from the FBM.
\item A {\em occupied set} that contains $k$ randomly 
chosen block address from the N-FBM .  
\end{enumerate}

Combine the two sets into a {\em combined} set. Note that since the two sets are disjoint, 
there are no duplicates in the combined set.

\begin{enumerate}
\item Select an item randomly from the combined set.
\item For the item, $i$ selected in step 1 --
\begin{enumerate}
 \item If $i$ originally belonged to the {\em free set}, 
use the corresponding block for the
 ORAM write and update the FBM and N-FBM.
 \item If $i$ is from the occupied set, reencrypt the block corresponding to the 
 address and reencrypt an additional 
random location from the N-FBM. 
\end{enumerate}
 Remove $i$ from the combined set.
\item Repeat steps 1-3, $k$ times.
\item For all the remaining items, $i$, that are also in the {\em free set}, 
replace the addresses back in the FBM from where they were selected. 
\end{enumerate}

\begin{lemma}
\label{lemma_rand_ind_fix}
The sample of $k$ physical blocks modified for {\em every} ORAM write
(due to the selection protocol) is indistinguishable from a sample of $k$ blocks 
selected uniformly at random out of the $N$ physical blocks in the ORAM.
\end{lemma}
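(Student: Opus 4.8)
The plan is to mirror the per-write argument of Lemma~\ref{lemma_rand_ind}, showing that every physical block $x$ is modified by the new selection protocol with probability exactly $k/N$, which is precisely the probability that $x$ appears in a uniformly random $k$-subset of the $N$ ORAM blocks. Let $X$ be the event that $x$ lies in a uniform sample of $k$ out of $N$ blocks, so $Pr[X]=k/N$, and let $X'$ be the event that $x$ is among the $k$ blocks modified by the protocol. I would establish $Pr[X']=Pr[X]$; combined with the fact that every selected block is reencrypted under a semantically secure cipher, this gives the claimed indistinguishability.

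First I would pin down the combinatorial setting produced by the initialization phase: exactly $N/2$ blocks are occupied and exactly $N/2$ are free, the free addresses recorded at random coordinates in the FBM and the occupied addresses in the N-FBM, and every subsequent write relocates one block (freeing one, occupying one) so these counts are invariant. The structural fact that drives the whole proof, and the reason this protocol repairs the $\order{1/N}$ bias discussed in Section~\ref{oram_pets_fix}, is that the free set is sampled from the FBM and the occupied set from the N-FBM, whose valid address sets are \emph{disjoint} by construction; hence the combined set has no duplicates and size exactly $2k$. Following Lemma~\ref{lemma_rand_ind}, I would then argue that because FBM entries are invalidated, and N-FBM entries added and freed, uniformly at random over the protocol's history, each of the $N/2$ free blocks is equally likely to be a valid FBM entry and each of the $N/2$ occupied blocks equally likely to be a valid N-FBM entry.

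The main computation is a split on the occupancy of $x$. If $x$ is free it can enter the combined set only via the free set, and since the free set consists of $k$ addresses drawn uniformly from the $N/2$ valid FBM entries, $Pr[x\in\text{combined set}]=k/(N/2)=2k/N$; symmetrically, if $x$ is occupied it can enter only via the occupied set, again with probability $2k/N$. Thus $Pr[x\in\text{combined set}]=2k/N$ in both cases, and, crucially, this value is identical for free and occupied blocks, so conditioning on occupancy leaks nothing. Conditioned on membership, $x$ is modified iff it is one of the $k$ items drawn without replacement from the $2k$-element combined set; by the same symmetry (hypergeometric) argument used for the event $Y$ in Lemma~\ref{lemma_rand_ind}, this occurs with probability $1/2$. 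Multiplying yields $Pr[X']=\frac{1}{2}\cdot\frac{2k}{N}=\frac{k}{N}=Pr[X]$.

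I expect the real difficulty to lie not in this arithmetic but in justifying the two distributional assumptions that underpin the disjointness and uniformity claims: (i) that the N-FBM, despite acquiring a stale duplicate whenever a logical block is relocated, always represents each currently occupied address exactly once among its valid entries, so that sampling the occupied set is genuinely uniform over occupied blocks; and (ii) that the bitmap-based freeing of stale N-FBM locations keeps the free and occupied address sets exactly disjoint across rounds. I would discharge (i) and (ii) by invoking the duplicate-handling mechanism of Figure~\ref{sel_desc}, which marks the previous N-FBM location of a relocated block as free precisely when that block's address is returned to the FBM, and finally observe that the extra reencryptions performed when an occupied-set block is chosen keep the number of physical modifications fixed, so the disjoint-set structure is never betrayed through the length of the write trace.
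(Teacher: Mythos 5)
Your proposal is correct and follows essentially the same route as the paper's proof: you compute $\Pr[x\in\text{combined set}]=2k/N$ by splitting on whether $x$ is free (via the FBM) or occupied (via the N-FBM), observe the value is the same in both cases, and multiply by the probability $1/2$ of being among the $k$ items drawn from the $2k$-element combined set to get $k/N$. Your added care in discharging the disjointness and uniformity assumptions (and the duplicate-handling via the bitmap) is a reasonable elaboration of what the paper leaves implicit in the surrounding construction, but it does not change the argument.
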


\begin{proof}
 First, note that for a block $x$ that is currently, the probability of being selected to 
 the combined set is $Pr[x \in combined | x \in free ] =  1/2 \times k/(N/2) = k/N$, i.e.  
 probability that $x$ is selected into the combined set depends on the probability 
 that $x$ is free (=1/2 since at any given time $N/2$ random blocks are free out of all $N$ blocks on the disk) 
 and that $x$ is one of $k$ selected block addresses from the FBM.  
 Similarly, the probability of $x$
 being selected to the combined set if $x$ is occupied is $Pr[x \in combined | x \in occupied ] = 1/2 \times k/(N/2) = k/N$. 
 Thus, $Pr[x \in combined] = 2k/N$. Now, if $x$ is in the combined set, the probability that $x$ will 
 be one of the $k$ items selected randomly out of the $2k$ items in the 
 combined set is $Pr[x \in k | x \in combined] = 1/2 \times 2k/N = k/N$. 
 This is equal to the probability, $P[x \in random] = k/N$ of $x$ being a part of any random sample 
 of $k$ blocks selected uniformly out of all $N$ disk blocks.
 
\end{proof}

\smallskip
\noindent
{\bf Selecting free locations in the N-FBM. Bounding the stash size.}
As discussed before after a logical block has been written to a physical block, 
the physical block address needs to be added to a random location in the N-FBM 
that does not contain a valid block address of a currently occupied block -- that is a 
free location. To achieve this, random locations from the N-FBM are selected until 
a free location is obtained to write the block address. Unfortunately, 
this is an expensive process

An alternative solution is to write the logical block to the in-memory stash when 
randomly selecting a location from the N-FBM does not yield a free location to 
write the corresponding physical block address. For example, 
when a logical block $b$ is to be updated, and the free block selection protocol 
provides physical block $y$ to perform the write, \sysoram~ randomly selects a 
location from the N-FBM. If the location is free (as determined by the bitmap), 
$b$ is written to $y$, and $y$ is written to the selected location in the N-FBM.
On the other had, if the location selected from the N-FBM is not free, $b$ is 
written to the stash and $y$ is added back to its original location in the FBM.
Further, the block address already stored at the N-FBM location selected is reencrypted.
In effect, there is no change to the state of the FBM or the N-FBM.




 
However, in this case, the logical block stash size 
needs to be bounded under the new constraints. Note that now the probability that each block write succeeds 
depends on two conditions -- the probability that the random free block selection protocol yields a free block in round $i < k$ of the protocol (= 1/2) {\em and} 
the probability that a randomly selected location from the N-FBM is free (=1/2). Thus, the probabilty with which a block is successfully 
written to the disk in one round of the selection protocol is 1/4. Then, to bound the stash, the value of $k$ needs to be modified such that 
the rate parameter of the D/M/1 queue (on which the stash is modelled), $\mu = k/4 > 1$. This
ensures that the stash size is bounded to a constant size with failure probability negligible in the security 
parameter \cite{Blass:2014:TRH:2660267.2660313}. \sysoram~ sets $k = 5$.


\smallskip
\noindent
{\bf Storing the bitmap.~}
A final detail is to securely store the bitmap. Since, large DRAMs are ubiquitous in modern systems, 
the bitmap could be stored entirely in memory and thus does not need to be subject to the ORAM 
protocols. For example, for a 1TB storage device, 32MB of memory is enough to hold the bitmap. 
Nonetheless, for large storage devices it may be
desirable to store the bitmap on the disk to ensure consistency accross failures.
Note that if the bitmap is stored on disk, it also needs to be subject to the ORAM protocols 
to ensure that the adversary cannot link accesses based on the bitmap updates. More specifically, 
when a part of the bitmap is updated, it needs to be relocated to a new random location.

The bitmap is stored in multiple random disk blocks while an array (``bitmap position array'') 
in memory tracks the location of the disk blocks storing a part of the bitmap. The ``bitmap 
position array'' is significantly smaller in size than the \sysoram~ stash. For example with 4KB physical block size, the 
number of disk blocks required to store the bitmap for a 1TB device is $2^{13}$. With the  bitmap position array storing 
a 8 byte physical block address for each such disk block, the total size of the array is 64KB. Storing the bitmap thusly 
ensures that accessing the bitmap requires only 1 disk access. 

\smallskip
\noindent
{\bf Asymptotic Complexity.~}
The new free block selection protocol and the addition of the N-FBM and the bitmap does not 
change the asymptotic complexity from Section \ref{ORAM:findingfree}. First, selecting a {\em non-duplicate} 
block address from the N-FBM to the occupied set requires only a constant
number of accesses within which at least one expected 
{\em non-duplicate} block address in the N-FBM will be found. If the bitmap is stored 
in-memory, the accesses are free. Even when the bitmap is stored on disk, 
accesses to a part of the bitmap can be performed with 1 disk access as described before. 
The FBM requires constant number of accesses for adding/selecting items. Thus, the 
free block selection presented here can be performed with $\order{1}$ access complexity and the ORAM write complexity 
is dominated by the update to the position map -- \sysoram~ write complexity is still $\order{log_{\beta}N}$

\begin{thm}
\label{thm_security_oram}
 \sysoram~  provides write access pattern indistinguishability.
\end{thm}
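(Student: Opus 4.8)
The plan is to show that the \emph{write trace} produced by any sequence of DL-ORAM write operations is computationally independent of the logical addresses touched, so that two equal-length write access patterns induce indistinguishable traces (the HWA property of Theorem~\ref{model:thm}). I would decompose each write trace into two parts: the \emph{set of physical block locations} that are modified, and the \emph{contents} written to them. The argument then rests on two pillars already in hand: Lemma~\ref{lemma_rand_ind_fix}, which fixes the distribution of modified locations to be that of a uniform random $k$-subset of the $N$ physical blocks, and the semantic security of the block cipher, which fixes the contents to be indistinguishable from random. If both parts are independent of the logical id being accessed, the composite trace carries no information about the access pattern.

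First I would analyze a single $write\_oram(id,d)$ operation and enumerate exactly which physical blocks it touches: the data block itself, the $\order{log_{\beta}N}$ nodes on the root-to-leaf B+ tree path (each relocated to a freshly selected free block), the $k$ blocks examined by the free-block selection protocol of Section~\ref{oram_pets_fix}, and the constant-size updates to the FBM column, N-FBM column, FBM/N-FBM headers, and bitmap. The key step is to apply Lemma~\ref{lemma_rand_ind_fix} to each free-block selection so that every relocation target is drawn from a distribution indistinguishable from uniform over the physical address space and, crucially, independent of $id$. I would then invoke semantic security to conclude that the ciphertexts written -- including the decoy re-encryptions performed for occupied-set items and for unchanged FBM/N-FBM/bitmap cells -- are indistinguishable from fresh random strings, so an adversary cannot separate a genuine modification from a decoy one. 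Hence the trace of one operation is indistinguishable from an idealized trace consisting of a uniform random location set filled with random contents.

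Next I would lift this per-operation statement to a polynomial-length sequence by a standard hybrid argument: define hybrids $H_0,\dots,H_q$ where $H_j$ replaces the real traces of the first $j$ operations by idealized ones; adjacent hybrids differ in a single operation and are indistinguishable by the per-operation claim, so the total advantage over $q = \mathrm{poly}(\lambda)$ rounds remains negligible. Because each write relocates data and re-encrypts with fresh randomness, writes to the same logical location in different rounds produce independently distributed traces, which furnishes the unlinkability needed for HWA. Since the idealized trace distribution is manifestly independent of the access pattern, any two equal-length write sequences are indistinguishable, establishing the theorem.

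The hard part will be justifying that the uniformity of Lemma~\ref{lemma_rand_ind_fix} genuinely holds \emph{across} rounds rather than merely within one -- this is precisely the leakage that the revised protocol of Section~\ref{oram_pets_fix} was introduced to repair. Concretely, I must argue that the FBM and N-FBM remain disjoint so that the combined set is drawn from the full partition of blocks into free and occupied, that the duplicate-handling via the position map and bitmap keeps the N-FBM free-location count at exactly $N/2$, and that the decoy accesses (the extra N-FBM re-encryption for occupied-set items) make occupied selections indistinguishable from free selections. Finally, I would need the stash-overflow probability to be negligible: modeling the stash as a D/M/1 queue with service rate $\mu = k/4 > 1$ for $k=5$ bounds it to constant size except with probability negligible in $\lambda$, so the trace is never forced into a distinguishable regime by stash spillover.
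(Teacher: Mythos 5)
Your proposal takes essentially the same route as the paper's own proof: both argue that every $write\_oram$ produces a fixed-shape trace (the $k$ modified data blocks, the FBM/N-FBM/bitmap touches, and the relocated position-map path) whose locations are uniform by Lemma~\ref{lemma_rand_ind_fix} and whose contents are indistinguishable from re-encryptions of random blocks by semantic security, so the whole trace can be simulated without knowledge of the logical address. Your explicit per-round hybrid argument and your flagging of the cross-round uniformity and stash-bound issues are details the paper leaves implicit, but they do not change the substance of the argument.
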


\begin{proof}
To show write access pattern indistinguishability, we show the 
existance of a simulator that can generate a write trace indistinguishable from the write 
trace generated by {\em any} \sysoram~ write, without any knowledge
of the block being written {\em or} of the data stored in the ORAM, 
based only on public information. 

First, note that each ORAM write (including updates to the position map) 
invariably generates the following write trace -- 

\begin{enumerate}
 \item Modify $k$ {\em unique} disk blocks. 
 \item Modify $k$ locations in the FBM.
 \item Modify $k$ locations in the N-FBM.
 \item Modify $2 \times k$ bitmap locations (if stored on disk).
\end{enumerate}

Now a simulator, $S$ can generate a write trace indistinguishable from the 
write trace described above as follow --

\begin{enumerate}
 \item Reencrypt $k$ {\em random} disk blocks. As shown by Lemma~\ref{lemma_rand_ind_fix}, the $k$ blocks 
 modified by the write are indistinguishable from $k$ blocks selected randomly out 
 of all $N$ blocks.
  \item Reencrypt $k$ {\em random} locations from the FBM. Due to uniformly random selection of entries from 
  the FBM, this is indistinguishable from a real FBM access.
 \item Reencrypt $k$ random locations in the N-FBM. Due to all entries in the N-FBM being at 
 uniformly random locations, this is indistinguishable from a real N-FBM access. 
  \item Reencrypt $2 \times k$ {\em random} bitmap locations. As random locations in the N-FBM are modified (due 
  to Step 3), this is indistinguishable from a real modification to the bitmap.
\end{enumerate}

Semantic security of the encryption scheme ensures that reencrytion is indistinguishable from a real 
modification. Finally, each access results in writes to
blocks of the position map along with the block that 
the user wants to access. Position map and data blocks are re-encrypted with semantic security and thus 
indistinguishable. 

%
\end{proof}

\smallskip
\noindent
{\bf Simulating a write.~} 
Theorem \ref{thm_security_oram} shows that a $write\_oram$ write trace can be generated without 
any knowledge of the write being performed or the data stored in the ORAM. Consequently, a 
$write\_oram$ access can be simulated even with random data stored in the ORAM and the auxiliary 
data structures (FBM, N-FBM, bitmap etc.).  We use this property of \sysoram~ as a basis for 
the solution described in Section \ref{cons}. 
%
Due to semantic security, re-randomizing blocks with random data is indistinguishable from reencrypting a block with valid data.

\section{Access Type Indistinguishability}
\label{cons}
\begin{figure}
\centering
 \includegraphics[scale=0.20]{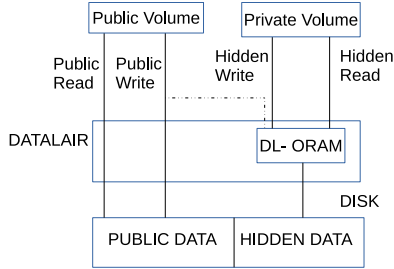}
\vspace{-0.5cm}
 \caption{Design of DataLair {\em Lite}. Writes to the hidden partition
  happen through DL-ORAM.\label{datalair_basic}}
\end{figure}

We now detail the design of \sysname~, which ensures HWA by
deploying \sysoram~ (Section~\ref{ORAM}) and PAT through its access
protocols described below.  For PAT, \sysname~ ensures that a device
containing both public and hidden data is indistinguishable from a device
containing only public data.

First, we describe a simple secure design, \sysname~ {\em Lite}, which
sacrifices storage space for reduced design complexity.  Section
\ref{cons_final}, introduces a more complex but space-efficient design
dubbed \sysname~ {\em Full}.

\subsection{\sysname~ Lite: Isolating Volumes}
\label{cons_initial}
\noindent
{\bf Setup.~} 
\sysname~ Lite maps blocks from {\em two} logical volumes to an underlying
block device and can be set up in two modes of operation -- $ONLY\_PUB$
(only public) and $PUB\_HID$ (public and hidden).  In the $ONLY\_PUB$ mode,
the device only stores data from a ``public'' logical volume that is
disclosed to an adversary.  For $PUB\_HID$ mode, \sysname~ Lite also stores
data from a hidden volume.  \sysname~ Lite fixes the size of each volume to
50\% of the underlying physical device size.  Each logical volume may
support a filesystem.  \sysname~ Lite creates two physical partitions on the
device -- a public partition and a hidden partition, according to the
corresponding logical volume sizes (Figure~\ref{datalair_basic}).  When,
\sysname~ Lite is initialized in $ONLY\_PUB$ mode, the hidden partition is
filled with random data.

The underlying storage is a block device with $N$ blocks of size $B$ each
and physical block address $P_{id} \in [1,N]$.  The format of the address
would vary across different types of block devices, e.g., in case of a
hard disk, the physical block address would be the sector numbers that constitute
one block on the physical device.

Logical block addresses $V_{id} \in [1,|V_{id}|]$ are used to reference blocks in
the logical volumes ($|V_{id}|$ is the size of the volume).  The logical
blocks are mapped to the physical blocks using a device mapper.  Data from
the public volume is mapped to the public partition directly (as 
indicated by the overlying FS) while data from
the hidden volume is mapped to the hidden partition using \sysoram~.  In
addition, semantically secure encryption is used to encrypt all data and
metadata before writing.  Public data is encrypted with a key $K_{pub}$
available to the adversary.  Hidden data is encrypted with secret key
$K_{hid}$. In practice, keys may be derived from user passwords.

\smallskip
\noindent
{\bf I/O.~}
\sysname~ Lite maps logical volume I/O into either public or hidden volume operations.   

Public reads and writes are straightforward since \sysname~ Lite can
linearly translate the logical block address and perform a read/write to the
corresponding physical block in the public partition.  A read to the hidden
volume calls $read\_oram$ on the hidden partition.  In addition, to ensure
PD, with {\em every} public write, \sysname~ Lite performs a hidden
operation as follows:

\begin{itemize}

\item If there is a write queued up for the hidden logical volume, it is
performed by calling $write\_oram$.

\item In the absence of a write for the hidden volume, or if \sysname~ Lite
is used in $ONLY\_PUB$ mode, a write is {\em simulated} for \sysoram~ (as
described in Section~\ref{ORAM}).

\end{itemize}
 
Effectively, this ensures that every write to the hidden volume is preceded
by a write to the public volume.  If there is no public write when a hidden
write request arrives at \sysname~ Lite, the hidden data is queued in the
\sysoram~ stash.  With every write to the public volume, either an
outstanding write (or data from the stash) is written to the hidden volume
(using \sysoram~) or a write is {\em simulated}.  As shown before, these two
cases are indistinguishable to an adversary without the key for \sysoram~.

\smallskip
\noindent
{\bf Security.~} 
The construction described above provides PD for the hidden volume.  First,
both the $ONLY\_PUB$ and $PUB\_HID$ modes of operations create public and
hidden partitions of equal size.  A write to the public volume in both the
cases results in indistinguishable modifications to the hidden partition --
either due to an actual write or a simulation.  This guarantees PAT
indistinguishability.  Further, deploying \sysoram~ on the hidden partition
ensures HWA indistinguishability.  Recall that these are the necessary and sufficient
conditions to ensure PD-CPA security.

\subsection{\sysname~ Full: Merging Volumes}
\label{cons_final}
Although, \sysname~ Lite (Section~\ref{cons_initial}) achieves PD, it makes
sub-optimal use of storage space -- e.g., in $ONLY\_PUB$ mode, a
hidden partition uses up 50\% of the space allocated to it, notwithstanding
of actual use.  A more reasonable solution would allow physical volume
storage space to correspond to logical use requirements.  Further, space not
used for hidden data should be available for public data and vice-versa.

To this end, \sysname~ Full allows the user to create two (or more) volumes
of variable sizes and stores them on the {\em same} physical partition.  In
this case, both the public and the hidden volume can be of the same logical
size as the underlying partition and use all the available space (in this
case up to 50\% of total device size) for either hidden or public data. 
We provide the intuition for the restricted device usage further below.

Unfortunately, achieving this is significantly more challenging than the
{\em Lite} construction -- the main problem being mapping public data
to independent locations in the presence of hidden data.  We detail below.

\smallskip
\noindent
{\bf Mapping Public Data.~}
\label{final:mapping_public}
First, with both public and hidden data being stored within the same
physical address space, writing public data straightforwardly to physical blocks 
indicated by an overlying filesystem is not possible. Since \sysname~ does
not restrict the choice of filesystem, 
the distribution of
public data will also determine the distribution of hidden
data.  For example, with a log structured filesystem on the public volume,
all hidden data will end up being ``pushed" towards the end of the disk.  This
breaks the security of the random free block selection mechanism in
\sysoram~.  Instead, public data will need to be mapped {\em
randomly} without compromising
overall PD. This requires storing a corresponding mapping table for the public volume .

\smallskip
\noindent
{\bf Public Position Map (PPM).~}
\sysname~ Full stores the logical to physical block address mappings for public
data in an array called the {\em Public Position Map} (PPM), stored at a fixed
device location.  The PPM is similar to the mapping table used by most
device mappers.  Importantly, the PPM is considered to be public data and
thus not subject to PD.

To proceed, it is necessary to define two important terms here to 
categorize physical blocks on the basis of their state of occupancy: 
{\em truly free} and {\em apparently free}.
{\bf Truly free block:~} a block that does not contain any (public or hidden) data.
{\bf Apparently free block:~} a block that contains hidden data and the use
of which needs to be hidden from an adversary, i.e., the block needs to
``appear'' free to an adversary.

To maintain PD, public data writes should not avoid {\em apparently free}
blocks by writing {\em around} hidden data.
Thus, while writing public data to randomly selected blocks, \sysname~ must 
ensure that {\em all} free blocks (including {\em truly free} and {\em apparently free}) are equally 
likely to get selected to complete the write.
An obvious solution then is to choose a
random block and write the public data there if it is unoccupied. 
If the block is {\em apparently free}, the hidden data there can be relocated
to a new random location subsequently.

This approach however creates a significant problem.  Recall that \sysoram~
writes hidden data by deploying the uniform free block selection protocol (detailed
in Section~\ref{oram:free_block}), using the FBM for selecting ``free'' block
addresses.  In the current context, to ensure correctness, 
the FBM should contain addresses of only {\em truly free} blocks, i.e., blocks
that do not contain either public or hidden data.  Thus, randomly choosing a block for 
writing public data will also require the
corresponding block address to be invalidated in the FBM.  Otherwise, the FBM will contain
block addresses that are already occupied by public data. If such a block address 
is selected for a subsequent hidden write, the 
public data in the block will have to be moved elsewhere (leaking the 
presence of hidden data) to complete the write. 
Unfortunately, invalidating a particular randomly chosen block address in the FBM is not
straightforward -- by construction the location of a block address in the FBM is
randomized.

The solution then is to select block addresses for public writes also 
from the FBM while updating the FBM in the process. The problem however with 
naively implementing this is that the FBM is hidden data i.e, the FBM is encrypted 
with the \sysoram~ secret key. Thus, using the FBM for public writes 
would entail the user to provide the FBM key to the adversary since 
all public operations and data structures in \sysname~ needs to be transparent 
to the adversary for PD. Providing the FBM key to the adversary 
breaks the security of \sysoram~ if hidden data is also stored.
\begin{figure}
\centering
\includegraphics[scale=0.15]{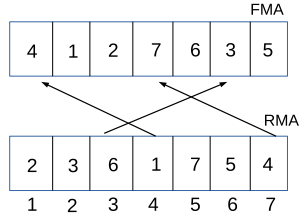}
\vspace{-0.2cm}
\caption{Sample PFL with 7 entries. Each entry in the FMA is a physical block address. The reverse mapping array (RMA) 
stores an index pointer to 
the forward mapping array (FMA) for a particular entry.\label{pub_free_list}}
\end{figure} 

\smallskip
\noindent
{\bf Public Free List (PFL).~}
To solve this, \sysname~ selects free block addresses from the FBM but also provides a 
way to {\em plausibly deny} this. The idea is to store a public (encrypted with the public key) 
list of block addresses 
corresponding to the blocks that do not contain public data. The list further supports 
the following two properties: i) it allows efficient retrieval of uniformly random entries, and 
ii) the location of any given {\em entry} in the list can be determined efficiently. We detail below.

The PFL (Figure~\ref{pub_free_list} ) is a data structure keeping track of block addresses of blocks
that {\em do not} contain public data.  The PFL is public (not subject to deniability) 
and is composed of
two arrays:

\begin{enumerate}
\item The {\em forward mapping array} (FMA) (size $N$) of block addresses that
currently do not contain public data  ({\em apparently free} + {\em truly free}). 
Uniformly random block addresses can be selected by picking up the entry at a
randomly selected index in the array.  If the selected entry is to be
removed from the PFL (use case described later), the array is subsequently
{\em compacted}, by moving the entry at the end of the array to the index
corresponding to the removed item.  The compacting ensures that the {\em
real} size of the array is always known and entries can be picked uniformly.

\item The {\em reverse mapping array} (RMA) (size $N$) tracks the index
in the forward mapping array corresponding to each physical block address. 
Whenever an entry is removed/added or replaced in the forward mapping array,
the reverse mapping array is updated as well.
\end{enumerate}

The intuition behind the PFL is that the FMA allows \sysname~ 
to efficiently select and retrieve a free block address that does not contain 
public data, while the RMA allows efficiently locating 
the index of a particular free block address in the forward mapping array. 

\smallskip
\noindent
{\bf Public write free block selection.~}
\sysname~ Full performs the public write free block selection as follows. First, 
the uniform free block selection protocol using the FBM is deployed as described in Section~\ref{oram:free_block}.
More specifically, for a public write, \sysname~ Full creates a {\em free set} and the {\em random set}. 
In addition, now the {\em random set} can be built by using random block addresses from the PFL because 
the blocks that already contain public data (and thus not part of the PFL) 
can be trivially excluded as being occupied.
This is followed by executing the $k$ rounds of the protocol using the combined set. 
If the 
protocol yields a free block address then the data is written to that block. 
Then, using the RMA, \sysname~ Full determines the location of the block address 
in the FMA, which is subsequently removed. 
This allows \sysname~ Full to claim that the 
block address was actually selected from the PFL. 

If the protocol yields no free blocks, the data is
still written to the disk (instead of being added to the in-memory stash). The idea here is to ensure that 
a public write always translates to a write to the disk. Since the modifications 
to blocks not containing public data (due to a hidden write or \sysoram~ simulation) 
are attributed to public data writes, writing 
public data to the stash can result in inconsistent modifications on the disk and 
violate PD.

To implement this, when the selection protocol does not yield 
a {\em truly free} block for a public write, 
the public data is instead written to an {\em apparently free} block. 
The hidden data there can then be moved to the stash and written back in a subsequent
hidden write.  An {\em apparently free} block address can be straightforwardly
selected from the \sysoram~ position map, which as described before 
stores the logical to physical address mappings for data 
in the ORAM (hidden data in this case). Subsequently, this 
block address is removed from the PFL. Note that the PFL necessarily contains 
this address, since it contains entries of {\em all} blocks not 
containing public data -- which also includes blocks that already contain 
hidden data ({\em apparently free} blocks). Thus, a free 
block address selected using this procedure can also be attributed to being 
selected from the PFL


\smallskip
\noindent
{\bf Hidden write free block selection.~}
Hidden writes follow the same procedure as \sysname~ Lite by invoking 
the \sysoram~ write protocol. However unlike public writes, 
hidden data is still written 
to the stash if 
the selection protocol yields no free blocks. Recall 
that one of the requirements for a bounded \sysoram~ stash is to
ensure that half of the ORAM is free.  In this case, since \sysoram~ will
write to the blocks which are shared with public data, it necessary to
ensure that the combined size of public and hidden data is only half the
size of the device.  This is achieved by the \sysname~ device mapper only
allowing the user to create logical volumes with size equal to or less than
half of the device capacity.  Further, on reaching 50\% utilization, the
device mapper informs the user that the disk is full. 

\begin{figure}
\centering
 \includegraphics[scale=0.18]{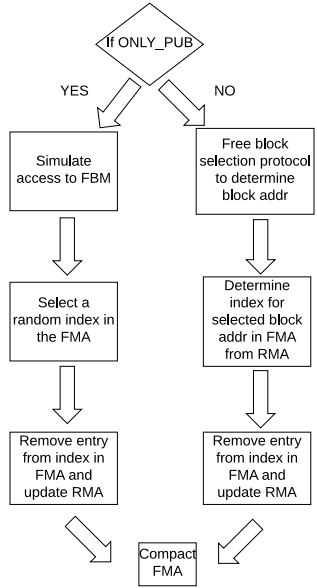}
 \vspace{-0.3cm}
 \caption{Free block selection protocol for public writes in \sysname~ Full. Step 1 is indistinguishable 
 for the two cases since simulation to the FBM is indistinguishable from a real access. 
 Using either the PFL or the free block selection protocol in step 2 provides uniformly random block addresses 
 of blocks that do not contain public data. The entry selected by the selection protocol in step 2 
 will be at a random index in the FMA as all entries in the FBM are randomized independent of their 
 locations in the FMA.\label{write_summary}}
\end{figure}

\smallskip
\noindent
{\bf Indistinguishability between two modes of operation.~}
Recall that in the $ONLY\_PUB$ mode, \sysoram~ and
the FBM is initialized with random data. Thus, the free block selection mechanism 
for public data writes described above (using the FBM) cannot be deployed in this case.
Fortunately, to overcome this the PFL can be used to efficiently select
uniformly random free block addresses. Once the address has been selected and removed 
from the FMA, the array is compacted.

Further, to ensure indistinguishability, the accesses due to the uniform free 
block selection using the FBM is simulated by reencrypting the required number 
of FBM blocks and required number of random block addresses selected from the PFL.

In summary (Figure~\ref{write_summary}), 
when writing public data in $ONLY\_PUB$ mode, a randomly chosen
block address is removed from the PFL while simulating the uniform free block selection protocol.
In $PUB\_HID$ mode, the randomly chosen block address for writing
public data is determined using the free block selection protocol while removing its corresponding address
from the PFL.  The two cases are indistinguishable since an actual access to
\sysoram~ and the FBM is indistinguishable from a simulation.

\smallskip
\noindent
{\bf Storing metadata for encryption.~}
Since encryption is performed at the block level and the reverse mapping array contain an entry 
for each disk block, the IVs/counters for the randomized semantically secure cipher used to
encrypt the physical blocks are stored in the reverse mapping array. 


%
%
%
%

\begin{figure}
\center
 \includegraphics[scale=0.20]{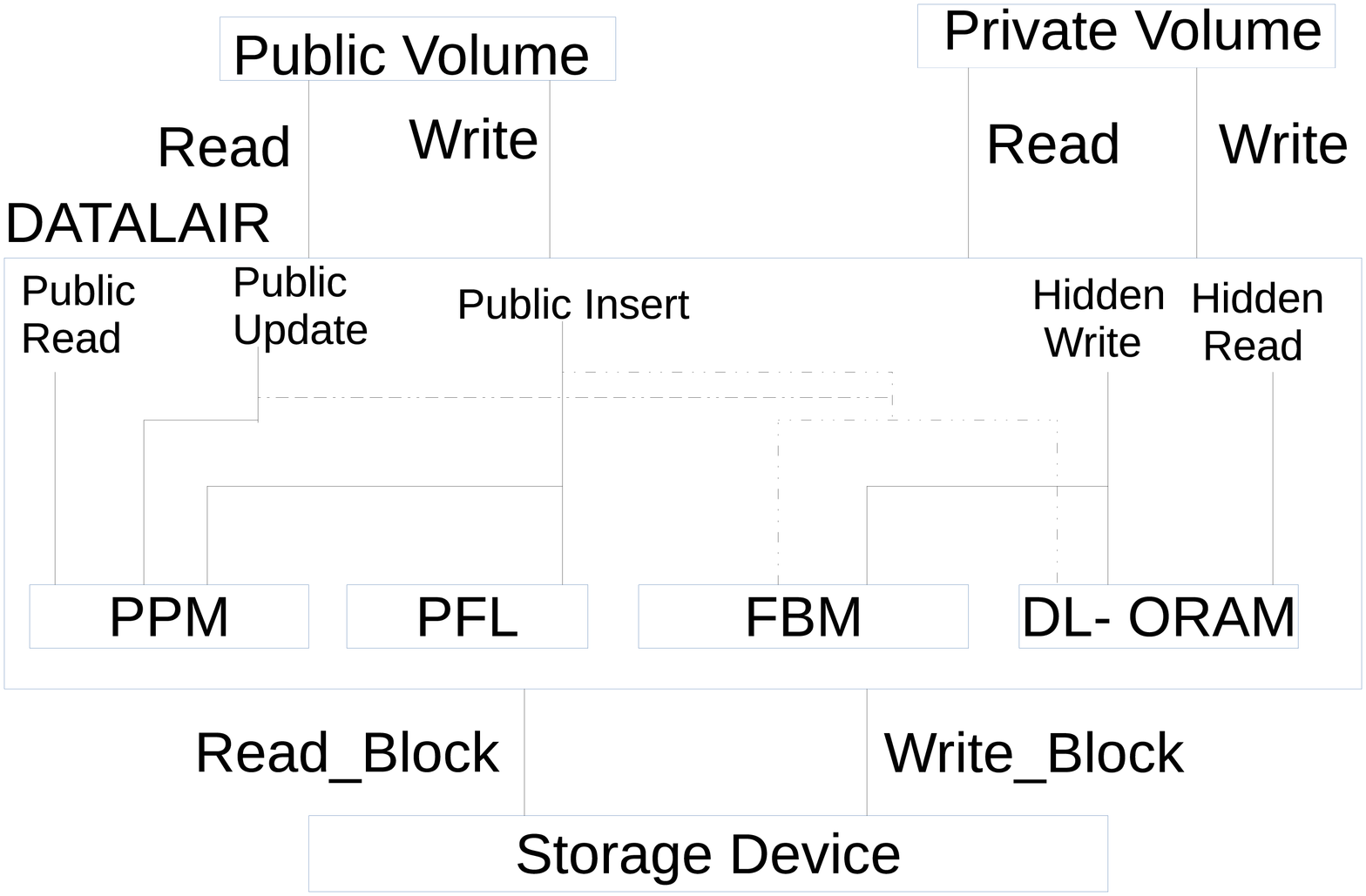}
 \vspace{-0.5cm}
 \caption{DataLair full design with the four main components: DL-ORAM, PPM, PFL and FBM. 
 Hidden data I/O is mapped through DL-ORAM while public data I/O is mapped through the PPM. 
 The public data inserts and updates simulates an access to the DL-ORAM.
 In $ONLY\_PUB$ mode free blocks are located using the PFL.
 \label{model:scheme}}
\end{figure}

\smallskip
\noindent
{\bf Optimization: in-place updates for public data.~} When a public block
is written for the first time (insert), it requires searching for a random
free block as described above, but subsequent updates can be made in-place,
thus avoiding additional accesses for finding free blocks and updating the
PPM.

\smallskip
\noindent
{\bf Storing the stash.~} The in-memory stash is stored to the disk at a graceful
power-down.  \sysname~ Full allocates a fixed location to store the
constant-sized stash.  On power-down, the stash is written encrypted to that
location.  If the stash is empty or not being used (in case of $ONLY\_PUB$
mode), \sysname~ Full writes random data instead of the stash for
indistinguishability.  On boot-up, the stash is read to memory and
reencrypted. 




Figure~\ref{model:scheme} illustrates the \sysname~ Full design. 

\smallskip
\noindent
{\bf Security.~} \sysname~ Full derives its security properties
straightforwardly from \sysname~ Lite.  First, note that the only difference
between the schemes is that public data and hidden data coexist in the same
physical address space.  Public data is mapped randomly, {\em independent}
of the locations where hidden data is already stored.  This is ensured by
the free block selection mechanism. The PPM and the PFL (added in
this construction) are public data and do not need to be protected.

Similar to the Lite construction, hidden writes through \sysoram~ are either 
performed with public writes (inserts and updates) or {\em simulated}
indistinguishably. This provides HWA indistinguishability. Further, as shown above, 
the $ONLY\_PUB$ mode of operation is indistinguishable from the $PUB\_HID$ mode 
for \sysname~ Full. This ensures PAT indistinguishability. As shown before, these 
are the necessary and sufficient conditions for PD.

\section{Evaluation}
\label{implementation}
\smallskip
\noindent
{\bf Implementation.~} 
We implemented \sysname~ Full as a kernel module device
mapper, a Linux based framework for mapping blocks in logical volumes to
physical blocks.  The default cipher used is AES-256 in counter mode with
individual per-block random IVs.  Underlying hardware blocks are 512 bytes
each and 8 adjacent hardware blocks constitute a \sysname~ ``physical
block''. Logical and physical block sizes are 4KB.

\sysname~ was benchmarked with two logical volumes (one public and one
hidden) using an ext4 filesystem in ordered mode (metadata journaling) on top.
Each volume was allocated a logical size of 25\% of the
underlying device capacity. This ensures that the combined size 
of the logical volumes is 50\% of the device, thus ensuring that 
50\% of the device is always free.
Throughput was compared against Hive \cite{Blass:2014:TRH:2660267.2660313} as 
well dm-crypt, a commonly used linux device mapper for full disk encryption. 

\smallskip
\noindent
{\bf Platform.~} Benchmarks were conducted on Linux boxes with Intel Core
i7-3520M processors running at 2.90GHz and 6GB+ of DDR3 DRAM.  The storage
device was a 1TB Samsung-850 Evo SSD. Logical volume sizes were set 64GB 
while \sysname~ was built on a 256GB physical partition. Benchmarking was 
performed using Bonnie++ \cite{bonnie++} on Ubuntu 14.04 LTS, kernel version 3.13.6. 
Benchmarking for Hive \cite{Blass:2014:TRH:2660267.2660313} was performed
with the same parameters by compiling the open source project \cite{hive}.

System caches were flushed in between tests. All tests were run 3 times 
and results collected with 95\% confidence interval.
\sysname~ stores all internal data structures persistently. 
The in-memory stash used was 
constrained to 50 4KB blocks (200KB of data).

%

\begin{table}
\begin{center}
    \begin{small}
    \begin{tabular}{ | l | l | l | l |}
    \hline
    Access & dm-crypt & \sysname~ & HIVE \cite{Blass:2014:TRH:2660267.2660313}\\ 
\hline
      Public Read  & 225.56 & 84.1 & 0.88 \\
      Public Write & 210.10 & 2.00 & 0.57 \\
      Hidden Read &  n/a & 6.00 & 5.36 \\
      Hidden Write &  n/a & 2.92 & 0.60 \\
    \hline
    \end{tabular}
    \caption{Throughput Comparison (MB/s). Higher is better. \sysname~ 
performance for public data reads is
    practical when compared to dm-crypt and almost 100x faster than existing 
work
    \cite{Blass:2014:TRH:2660267.2660313}. For hidden data writes, \sysname~ is 
5x
    faster. \label{throughput}}
   \vspace{-1.0cm}
    \end{small}
    \end{center}
\end{table}

\smallskip
\noindent
{\bf Throughput.~}
Table ~\ref{throughput} shows throughput comparison for \sysname~, HIVE and
dm-crypt. Public reads feature a throughput of about 85MB/s, 100x faster
than existing work \cite{Blass:2014:TRH:2660267.2660313} and only 2.5x slower 
than dm-crypt. The speedup results from the fact that public reads do not need to use 
the ORAM. Note that the PPM stil needs to be accessed first for determining the physical 
location of the logical block. This additional synchronous access results in the overhead 
when compared to dm-crypt.

Public writes simulate a \sysoram~ access. The improved write complexity 
of \sysoram~ compared to HIVE-ORAM~\cite{Blass:2014:TRH:2660267.2660313} 
results in a 4x speedup. Later we show how to optimize this further 
for more practical use. Similarly, 
hidden writes for \sysname~ are almost 5x faster than HIVE. 
Hidden reads perform comparably to HIVE since the overall read complexity 
is asymptotically the same for \sysoram~ and 
HIVE-ORAM~\cite{Blass:2014:TRH:2660267.2660313}. 


\begin{table}
\begin{center}
    \begin{small}
    \begin{tabular}{| l | l | l | l |}
    \hline
    Access & dm-crypt & \sysname~ & HIVE \cite{Blass:2014:TRH:2660267.2660313}\\ 
\hline
      Public Read & .007 & .018 & 1 \\
      Hidden Read  & n/a & .10 & .19 \\
      Public Write & .7 & 25 & 332 \\	
      Hidden Write & n/a & 92 & 219 \\
    \hline
    \end{tabular}
    \caption{Latency Comparison (in seconds). Lower is better. \sysname~ is 
100x faster than HIVE~\cite{Blass:2014:TRH:2660267.2660313} 
    for public reads and almost 10x faster public writes.  \label{latency}}
    \vspace{-1.0cm}
    \end{small}
    \end{center}
\end{table}

\smallskip
\noindent
{\bf Latency.~}
Table~\ref{latency} 
shows the latency comparison for \sysname~, HIVE and dm-crypt. Expectedly, 
\sysname~ public reads are almost 100x 
faster than HIVE~\cite{Blass:2014:TRH:2660267.2660313}. It also interesting to 
note that \sysname~ public writes are 
almost 15x faster than HIVE~\cite{Blass:2014:TRH:2660267.2660313}. This is due 
to the reduced number of I/Os that needs 
to be performed per access due to the better write-complexity of \sysoram~.


\smallskip
\noindent
{\bf Writing hidden data with public updates.~}
%
A straightforward 
optimization for \sysname~ is to use only public updates (in-place) for hidden 
writes/simulations, avoiding the expensive free block selection. In fact, 
since filesystem block access patterns typically follows a {\em zipfian 
distribution}~\cite{zipf_fs} -- only a small group of existing blocks are 
accessed/updated frequently --
an update to a group of already existing public data blocks is more likely than inserting new 
data. Also, since filesystems do not indicate deletes to the device, once the public volume 
is completely occupied, all subsequent operations during the lifetime of the 
disk will be treated as updates by the device mapper.



\begin{figure}
\centering
 \includegraphics[scale=0.50]{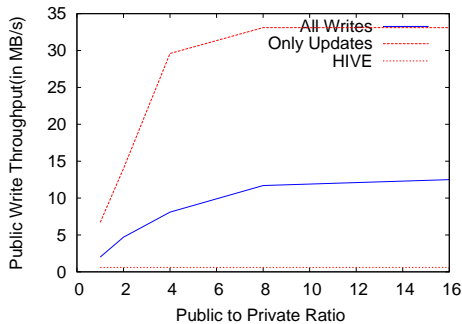}
 \vspace{-0.2cm}
 \caption{Variations in public write throughput vs. the public write 
to private write ratio. The x-axis represents the 
 number of public writes that are performed in between two private 
writes/simulation. The throughput plateaus at around 12 MB/s, around 
 6x faster than the configuration where hidden writes are made with each public 
write. The benefits of performing hidden writes only with updates is visible even in this case.
\label{phi_throughput}}
\end{figure}

\smallskip
\noindent
{\bf Frequency of Hidden Writes.~} 
%
\sysname~ features a solution for PD-CPA (Section \ref{pd}) where the number of hidden operations 
performed with each public operation, $\phi = 1$. 
For real world applications, it is reasonable to assume that a 
user will access hidden data less often that public data. In that case, $\phi$ 
can be configured according to an estimated workload to improve the public write 
throughput. 

Figure~\ref{phi_throughput} shows the variations in the public write throughput 
while increasing the public write to private write ratio. The throughput 
achieves a maximum of around 12MB/s when hidden writes/simulations 
are made every 10 public writes (inserts and updates). When hidden 
writes/simulations are performed only with updates (as described above), the 
write 
throughput achieves a maximum of around 30MB/s, around 40x faster than 
HIVE~\cite{Blass:2014:TRH:2660267.2660313}. Note that since HIVE~\cite{Blass:2014:TRH:2660267.2660313} uses a write-only ORAM 
for public writes, excluding hidden writes for a fraction of the 
public writes does not result in significant gains when compared to the overhead of the ORAM. 
Although, \sysname~ is still 7x slower 
than dm-crypt (Table~\ref{throughput}), 
the additional PD guarantees over full disk encryption makes this acceptable in 
practice.

\section{Conclusion}

This work shows that it is not necessary to sacrifice performance to achieve
plausible deniability (PD), even in the presence of a powerful multi-snapshot
adversary.  \sysname~ is a block device with practical performance and
PD assurances, designed around a new efficient write-only
ORAM construction.  \sysname~ public data reads are two orders of magnitude
faster than existing approaches while accesses to hidden data are 5
times faster. For more restricted settings, \sysname~ can achieve 
public data write performance almost 50x faster than existing work.

\section{Acknowledgement}
This work was supported by the National Science Foundation through
awards 1161541, 1318572, 1526102, and 152670. We would like to thank 
our shepherd, Giulia Fanti and the anonymous reviewers for their suggestions
on improving the paper. We also thank Vinay Ganeshmal jain for helping 
us implement \sysname~.

\vfill\eject
\appendix

\section{DL-ORAM Protocols}

\begin{algorithm}
 \SetKwInOut{Input}{input}
 \SetKwInOut{Return}{return}
  \Input{logical block address $id$}
 root:= //determine B+ tree root address from fixed location on disk\;
  $depth$ := $log_{\beta}(N)$\;
  $index$ := $\beta^{depth}$\;

 \While{not at leaf}{
      $root$ := // child \# $\floor{\frac{id}{index}}$\;
      // Search subtree rooted at $root$\;
      $blk$ = // Read physical block corresponding to $root$\;
      $depth$ := $depth - 1$\;
      $index$ := $index/\beta$\;
  }
      $addr$ := // entry for $B$ in $root$\;
      $blk$ := // Read block from disk with address $addr$\;
 \Return{Decrypt($blk$)}
  \caption{$read\_oram(id)$\label{Oramread}}
 \end{algorithm}
 \begin{algorithm}
 \small
 \SetKwInOut{Input}{input}
 \SetKwInOut{Return}{return}
 \Input{logical block address $id$,data $d$}

  root:= //determine B+ tree root address from fixed location on disk\;
  $depth$ := $log_{\beta}(N)$\;
  $index$ := $\beta^{depth}$\;
 	 
 \While{not at leaf}{
      $root$ := // child \# $\floor{\frac{id}{index}}$\;
      // Search subtree rooted at $root$\;	
      $blk$ = // Read physical block corresponding to $root$\;
      $depth$ := $depth - 1$\;
      $index$ := $index/\beta$\;
  }
	// Find free block for new write // \;
      $new\_blk\_id$ := // Find free block \; 
    
      disk.Write($new\_blk\_id$,$d$) \;
   		
   Map.updateMap($id$, $new\_blk\_id$) \;
   \caption{$write\_oram(\beta,d)$\label{Oramwrite}}
  
 \end{algorithm}

\begin{algorithm}
\small
 \SetKwInOut{Input}{input}
 \SetKwInOut{Return}{return}
 \Input{logical block address for map node $id$, physical block address where map node is written $new\_blk\_id$}
   $root$ := //Determine from fixed location\;
 \uIf{at $root$}{
 	// Update new root address at fixed location //
}
 \uElse{
 	$l$ := // READ leaf node for $id$\;
 	$id$ := // ID for leaf node\;
 	// Update $l$ with new mapping for $id$\;
 	$write\_oram(id,l)$\;
 	 }		 

	\caption{$Map.updateMap(id,new\_blk\_id)$\label{updatemap}}
\end{algorithm}


\begin{thebibliography}{10}

\bibitem{bonnie++}
{\em Bonnie++}.
\newblock \url{"http://www.coker.com.au/bonnie++" }.

\bibitem{hive}
{\em Hive}.
\newblock \url{"http://www.onarlioglu.com/hive" }.

\bibitem{plaus_Den}
{\em Plausible Deniability}.
\newblock \url{"http://en.wikipedia.org/wiki/Plausible_deniability" }.

\bibitem{Rubberhose}
{\em Rubberhose}.
\newblock \url{"http://en.wikipedia.org/wiki/Rubberhose_(filesystem)" }.

\bibitem{Truecrypt}
{\em TrueCrypt}.
\newblock \url{"http://truecrypt.sourceforge.net/" }.

\bibitem{Anderson98thesteganographic}
Ross Anderson, Roger Needham, and Adi Shamir.
\newblock The steganographic file system.
\newblock pages 43--60, 1998.

\bibitem{Blass:2014:TRH:2660267.2660313}
Erik-Oliver Blass, Travis Mayberry, Guevara Noubir, and Kaan Onarlioglu.
\newblock Toward robust hidden volumes using write-only oblivious ram.
\newblock In {\em Proceedings of the 2014 ACM SIGSAC Conference on Computer and
  Communications Security}, CCS '14, pages 203--214, New York, NY, USA, 2014.
  ACM.

\bibitem{Canetti:1997:DE:646762.706165}
Ran Canetti, Cynthia Dwork, Moni Naor, and Rafail Ostrovsky.
\newblock Deniable encryption.
\newblock In {\em Proceedings of the 17th Annual International Cryptology
  Conference on Advances in Cryptology}, CRYPTO '97, pages 90--104, London, UK,
  UK, 1997. Springer-Verlag.

\bibitem{Czeskis:2008:DED:1496671.1496678}
Alexei Czeskis, David J.~St. Hilaire, Karl Koscher, Steven~D. Gribble,
  Tadayoshi Kohno, and Bruce Schneier.
\newblock Defeating encrypted and deniable file systems: Truecrypt v5.1a and
  the case of the tattling os and applications.
\newblock In {\em Proceedings of the 3rd Conference on Hot Topics in Security},
  HOTSEC'08, pages 7:1--7:7, Berkeley, CA, USA, 2008. USENIX Association.

\bibitem{Goldreich:1996:SPS:233551.233553}
Oded Goldreich and Rafail Ostrovsky.
\newblock Software protection and simulation on oblivious rams.
\newblock {\em J. ACM}, 43(3):431--473, May 1996.

\bibitem{Han:2010:MSF:1920261.1920309}
Jin Han, Meng Pan, Debin Gao, and HweeHwa Pang.
\newblock A multi-user steganographic file system on untrusted shared storage.
\newblock In {\em Proceedings of the 26th Annual Computer Security Applications
  Conference}, ACSAC '10, pages 317--326, New York, NY, USA, 2010. ACM.

\bibitem{cryptoeprint:2013:694}
Lichun Li and Anwitaman Datta.
\newblock Write-only oblivious ram based privacy-preserved access of outsourced
  data.
\newblock Cryptology ePrint Archive, Report 2013/694, 2013.
\newblock \url{http://eprint.iacr.org/}.

\bibitem{phantom}
Martin Maas, Eric Love, Emil Stefanov, Mohit Tiwari, Elaine Shi, Krste
  Asanovic, John Kubiatowicz, and Dawn Song.
\newblock Phantom: Practical oblivious computation in a secure processor.
\newblock In {\em Proceedings of the 2013 ACM SIGSAC Conference on Computer
  \&\#38; Communications Security}, CCS '13, pages 311--324, New York, NY, USA,
  2013. ACM.

\bibitem{Mcdonald99stegfs:a}
Andrew~D. Mcdonald and Markus~G. Kuhn.
\newblock Stegfs: A steganographic file system for linux.
\newblock In {\em In Information Hiding}, 1999.

\bibitem{McDonald:1999:SSF:647596.731709}
Andrew~D. McDonald and Markus~G. Kuhn.
\newblock Stegfs: A steganographic file system for linux.
\newblock In {\em Proceedings of the Third International Workshop on
  Information Hiding}, IH '99, pages 462--477, London, UK, UK, 2000.
  Springer-Verlag.

\bibitem{zipf_fs}
Ranjit Noronha, Xiangyong Ouyang, and Dhabaleswar~K. Panda.
\newblock Designing a high-performance clustered nas: A case study with pnfs
  over rdma on infiniband.
\newblock In {\em Proceedings of the 15th International Conference on High
  Performance Computing}, HiPC'08, pages 465--477, Berlin, Heidelberg, 2008.
  Springer-Verlag.

\bibitem{1260829}
HweeHwa Pang, K.-L. Tan, and X.~Zhou.
\newblock Stegfs: a steganographic file system.
\newblock In {\em Data Engineering, 2003. Proceedings. 19th International
  Conference on}, pages 657--667, March 2003.

\bibitem{defy}
Timothy Peters, Mark Gondree, and Zachary N.~J. Peterson.
\newblock {DEFY:} {A} deniable, encrypted file system for log-structured
  storage.
\newblock In {\em 22nd Annual Network and Distributed System Security
  Symposium, {NDSS} 2015, San Diego, California, USA, February 8-11, 2014},
  2015.

\bibitem{Shi_obliviousram}
Elaine Shi, T~h.~Hubert~Chan, Emil Stefanov, and Mingfei Li.
\newblock Oblivious ram with o((log n) 3) worst-case cost.

\bibitem{6682886}
A.~Skillen and M.~Mannan.
\newblock Mobiflage: Deniable storage encryptionfor mobile devices.
\newblock {\em Dependable and Secure Computing, IEEE Transactions on},
  11(3):224--237, May 2014.

\bibitem{pathoram}
Emil Stefanov, Marten van Dijk, Elaine Shi, Christopher Fletcher, Ling Ren,
  Xiangyao Yu, and Srinivas Devadas.
\newblock Path oram: An extremely simple oblivious ram protocol.
\newblock In {\em Proceedings of the 2013 ACM SIGSAC Conference on Computer
  \&\#38; Communications Security}, CCS '13, pages 299--310, New York, NY, USA,
  2013. ACM.

\bibitem{Stefanov:2013:POE:2508859.2516660}
Emil Stefanov, Marten van Dijk, Elaine Shi, Christopher Fletcher, Ling Ren,
  Xiangyao Yu, and Srinivas Devadas.
\newblock Path oram: An extremely simple oblivious ram protocol.
\newblock In {\em Proceedings of the 2013 ACM SIGSAC Conference on Computer
  \&\#38; Communications Security}, CCS '13, pages 299--310, New York, NY, USA,
  2013. ACM.

\bibitem{hiding_flash}
Yinglei Wang, Wing kei Yu, S.Q. Xu, E.~Kan, and G.E. Suh.
\newblock Hiding information in flash memory.
\newblock In {\em Security and Privacy (SP), 2013 IEEE Symposium on}, pages
  271--285, May 2013.

\bibitem{censor_ex1}
{\em "How a Syrian refugee risked his life to bear witness to atrocities"}.
\newblock \url{"https://www.thestar.com/news/world/2012/03/14/how_a_syrian_refugee_risked_his_life_to_bear_witness_to_atrocities.html" }.  
  
\bibitem{censor_ex2}
{\em "Former Press Writer Mike Giglio Detained, Beaten and Released During Egypt Protests"}.
\newblock \url{"http://www.houstonpress.com/news/former-press-writer-mike-giglio-detained-beaten-and-released-during-egypt-protests-6722336" }.  

\bibitem{detworam}
Daniel S. Roche and Adam J. Aviv and Seung Geol Choi and Travis Mayberry.
\newblock Private email conversation.
\newblock June 12, 2017

\bibitem{datalair}
{\sc Chakraborti, A., Chen, C., and Sion, R.}
\newblock Datalair: Efficient block storage with plausible deniability against
  multi-snapshot adversaries.
\newblock {\em Proceedings on Privacy Enhancing Technologies 3\/} (2017),
  175--193.

  \bibitem{datalair_poster}
Anrin Chakraborti, Chen Chen, and Radu Sion.
\newblock Poster: Datalair: A storage block device with plausible deniability.
\newblock In {\em Proceedings of the 2016 ACM SIGSAC Conference on Computer and
  Communications Security}, CCS '16, pages 1757--1759, New York, NY, USA, 2016.
  ACM.
  
  
\end{thebibliography}
\end{document}